\newcommand{\ie}{{\emph{i.e.\/}}}
\DeclareMathOperator{\tr}{tr}
\DeclareMathOperator{\Tr}{Tr}
\newcommand{\ket}[1]{\ensuremath{|#1\rangle}}
\newcommand{\bra}[1]{\ensuremath{\langle#1|}}
\newcommand{\ketbra}[2]{\ensuremath{\ket{#1} \! \bra{#2}}}
\newcommand{\proj}[1]{\ensuremath{\ketbra{#1}{#1}}}
\newcommand{\braket}[2]{\ensuremath{\langle{#1}|{#2}\rangle}}
\newcommand{\1}{{\rm 1\hspace{-0.9mm}l}}
\newcommand{\SPAN}{\mathrm{span}}
\newcommand{\PP}{\mathcal{P}}
\newcommand{\ketV}[1]{\ensuremath{|#1\rangle\!\rangle}}
\newcommand{\braV}[1]{\ensuremath{\langle\!\langle#1|}}
\newcommand{\ketbraV}[2]{\ensuremath{\ketV{#1}\braV{#2}}}
\newcommand{\projV}[1]{\ensuremath{\ketbraV{#1}{#1}}}
\newcommand{\supp}{\mathrm{supp}}
\theoremstyle{definition}
\newtheorem{lemma}{Lemma}
\newtheorem{theorem}{Theorem}
\newtheorem{corollary}{Corollary}
\newtheorem{proposition}{Proposition}
\newtheorem{remark}{Remark}
\newtheorem*{rem*}{Remark}
\renewcommand{\AA}{\mathbb{A}}
\renewcommand{\P}{\mathbb{P}}
\def\>{\rangle}
\def\<{\langle}
\title{Excluding false negative error in certification of quantum channels}
\date{June 4, 2021}
\author{Aleksandra Krawiec$^{1,*}$ \and
{\L}ukasz Pawela$^1$ \and 
Zbigniew Pucha{\l}a$^{1,2}$}
\address{$^1$ 
Institute of Theoretical and Applied Informatics, Polish Academy of Sciences, 
ul. Ba{\l}tycka 5, 44-100 Gliwice, Poland}
\address{$^2$ Faculty of Physics, Astronomy and Applied Computer Science, 
Jagiellonian University, ul. {\L}ojasiewicza 11, 30-348 Krak{\'o}w, Poland}
\email{akrawiec@iitis.pl}
\begin{document}
\maketitle

\begin{abstract}
Certification of quantum channels 
is based on quantum hypothesis testing and involves also preparation of an input
state and choosing the final measurement. 
This work primarily focuses on the
scenario when the false negative error cannot occur, even if it leads to the
growth of the probability of false positive error. We establish a condition when
it is possible to exclude false negative error after a finite number of queries
to the quantum channel in parallel, and we provide an upper bound on the number
of queries. 
On top of that, we found a class of channels which allow for excluding false
negative error after a finite number of queries in parallel, but cannot be
distinguished unambiguously.  
Moreover, it will be proved
that parallel certification scheme is always sufficient,
however the number of steps may be decreased by the use of adaptive scheme.
Finally, we consider examples of certification of various classes of quantum
channels and measurements.
\end{abstract}

\section{Introduction}
Being deceived is not a nice experience. People have been developing
plenty of methods to protect themselves against being cheated and one of
these methods concerns verification of objects, also quantum ones.
The cornerstone for theoretical studies on discrimination of quantum objects 
was laid by Helstrom~\cite{helstrom1976quantum} a few decades ago.

In the era of Noisy Intermediate-Scale Quantum (NISQ) devices
\cite{preskill2018quantum,bharti2021noisy}, assuring the correctness of
components in undeniably in the spotlight. A broad review of multipronged modern
methods of certification as well as benchmarking of quantum states and processes
can be found in the recent paper~\cite{eisert2020quantum}. For a more
introductory tutorial to the theory of system certification we refer the reader
to~\cite{kliesch2021theory}. 
Verification of quantum processes is often studied in the context of specific
elements of quantum information processing tasks. Protocols for efficient
certification of quantum processes, such as quantum gates and circuits, were
recently studied in~\cite{liu2020efficient,zeng2020quantum,zhu2020efficient}.

Let us introduce the most general problem of verification studied in this work.
Assume there are two known quantum channels and one of them is secretly chosen.
Then, we are given the secretly chosen channel to verify which of the two
channels it is. We are allowed to prepare any input state and apply the given
channel on it. Finally, we can prepare any quantum measurement and measure the
output state. Basing on the measurement's outcome we make a decision which of
the two channels was secretly chosen. 
In this work we focus on the case when we are promised which of the channels 
is given. After performing some certification procedure we can either agree 
that the channel was the promised one or claim that we were cheated.
We want to assure that we will always realize when we are cheated. It may 
happen though, that we appear to be too suspicious and claim that we were 
cheated when we were not.
 
There are three major theoretical
approaches towards verification of quantum channels called minimum error
discrimination, unambiguous discrimination and certification. All these three
approaches can be generalized to the multiple-shot case, that is when the given
channel can be used multiple times in various configurations. The most
straightforward possibility is the parallel scheme
and the most sophisticated is the adaptive scheme (where we are allowed
to use any processing between the uses of the given channel).

The first approach is called minimum error discrimination (a.k.a. 
distinguishability or symmetric discrimination) and makes use of the distance 
between quantum channels expressed by the use of the diamond norm. 
In this scenario one wants to minimize the probability of making the erroneous 
decision using the bound on this probability given by the Holevo-Helstrom 
theorem \cite{helstrom1976quantum,watrous2018theory}.
Single-shot discrimination of unitary channels and von Neumann measurements 
were studied in~\cite{ziman2010single,bae2015discrimination} and 
\cite{ji2006identification,sedlak2014optimal,puchala2018strategies} 
respectively. 
Parallel discrimination of quantum channels was 
studied eg. in \cite{duan2016parallel,cao2016minimal}. It appeared that 
parallel discrimination 
scheme is optimal in the case of distinguishability of unitary 
channels~\cite{chiribella2008memory} and von Neumann 
measurements~\cite{puchala2021multiple}. In some cases however, 
the use of adaptive discrimination scheme can significantly improve the 
certification~\cite{harrow2010adaptive,krawiec2020discrimination}.
Advantages of the use of adaptive discrimination scheme in the asymptotic 
regime were studied in~\cite{salek2020adaptive}.
Fundamental and ultimate limits for quantum channel discrimination were 
derived in \cite{pirandola2019fundamental,zhuang2020ultimate}.
The works \cite{katariya2020evaluating,wang2019resource} address the problem of 
distinguishability of quantum channels in the context of resource theory.

In the second approach, that is unambiguous discrimination, there are three
possible outcomes. Two of them designate
quantum channels while the third option is the inconclusive result.   
In this approach, when the result indicated which channel was given, we know it
for sure. There is a chance however, that we will obtain an inconclusive
answer. Unambiguous discrimination of quantum channels was considered  
in~\cite{wang2006unambiguous}, while unambiguous discrimination of von Neumann
measurements was explored in~\cite{puchala2021multiple}. Studies on unambiguous
discrimination of quantum channels took a great advantage of unambiguous
discrimination of quantum states, which can be found eg.
in~\cite{feng2004unambiguous,zhang2006unambiguous,
herzog2005optimum,bergou2006optimal,herzog2009discrimination}.

The third approach, known as certification or asymmetric discrimination, is
based on hypothesis testing. We are promised to be given one of the two channels
and associate this channel with the null hypothesis, $H_0$. The other channel is
associated with the alternative hypothesis, $H_1$. When making a decision 
whether to accept or to reject the null hypothesis, two types of errors may 
occur, that is we can come across false positive and false negative errors.  
In this work we consider the situation when we want to assure that false
negative error will not occur, even if the probability of false positive error
grows. A similar task of minimizing probability of false negative error having
fixed bound on the probability of false positive was studied in the case of von
Neumann measurements in~\cite{lewandowska2021optimal}. Certification of quantum
channels was studied in the asymptotic regime e.g. 
in~\cite{wilde2020amortized,salek2020adaptive,katariya2020evaluating}.

It should come as no surprise that in some cases perfect verification is not
possible by any finite number of steps. Conditions for perfect minimum error
discrimination of quantum operations were derived in~\cite{duan2009perfect}.
Similar condition for unambiguous discrimination was proved
in~\cite{wang2006unambiguous}. However, no such conditions have been stated for
certification. In this work we derive a condition when we can exclude false
negative error after a finite number of uses in parallel. This condition holds
for arbitrary quantum channels and is expressed by the use of Kraus operators of
these channels. We will provide an example of channels which can be certified in
a finite number of queries in parallel, but cannot be distinguished
unambiguously. Moreover, we will show that, in contrast to discrimination of
quantum channels \cite{harrow2010adaptive,krawiec2020discrimination},  parallel
certification scheme is always sufficient for certification, although the number
of uses of the certified channel may not be optimal. On top of that, we will
consider certification of quantum measurements and focus on the class of
measurements with rank-one effects. The detailed derivation of the upper bound
for the probability of false positive error will be presented for SIC POVMs.

This work is organized as follows. After preliminaries in
Section~\ref{sec:preliminaries}, we present our main result, that is  the
condition when excluding false negative is possible in a finite number of uses
in parallel, in Theorem~\ref{th:condition_for_parallel_certification} in
Section~\ref{sec:condition_parallel_certification}. 
Certification of quantum
measurements is discussed in Section~\ref{sec:certification_of_povms}. 
Then, in
Section~\ref{sec:adaptive_certification}  we study adaptive certification and
Stein setting. The condition when excluding false negative is possible in
adaptive scheme is stated therein as
Theorem~\ref{th:equivalence_parallel_adaptive}. 
Finally, conclusions can be found in Section~\ref{sec:conclusions}.

\section{Preliminaries}\label{sec:preliminaries}
Let $\mathcal{D}_d$ denote the set of quantum states of dimension $d$, that is  
the set of positive semidefinite operators having trace equal one. Throughout
this paper quantum states will be denoted by lower-case Greek letters, usually
$\rho, \sigma, \tau$. For any state $\rho \in \mathcal{D}_d$ we can write its
spectral decomposition as $\rho = \sum_i p_i \proj{\lambda_i}$. 
Having a set of quantum states $\{\rho_1, \ldots, \rho_m\}$ with spectral 
decompositions $\rho_1 = \sum_{i_1} p_{i_1} \proj{\lambda_{i_1}}, \ldots, 
\rho_m = \sum_{i_m} p_{i_m} \proj{\lambda_{i_m}}$ respectively, their 
\emph{support} is defined as 
$\supp(\rho_1, \ldots, \rho_m) \coloneqq \SPAN \{ \ket{\lambda_{i_j}}: p_{i_j} 
> 0 \}$.
The set of unitary matrices of dimension $d$ will be denoted $\mathcal{U}_d$.

Quantum channels are linear maps which are completely positive and trace
preserving. In this work we will often take advantage of the Kraus
representations of channels. Let
\begin{equation}
\Phi_0(X) := \sum_{i=1}^k E_i X E_i^\dagger,
\quad 
\Phi_1(X) := \sum_{j=1}^l F_j X F_j^\dagger
\end{equation}
be the Kraus representations of the channels that will correspond to null and
alternative hypotheses, respectively. The sets of operators $\{E_i\}_i$ and
$\{F_j\}_j$ are called Kraus operators of channels $\Phi_0$ and $\Phi_1$,
respectively. We will use the notation $\supp(\Phi_0) \coloneqq \SPAN\{E_i
\}_i$, $\supp(\Phi_1) \coloneqq \SPAN\{F_j \}_j$, to denote the supports of
quantum channels.
Moreover, the notation $\1$ will be used for the identity channel.

The most general quantum measurements, known also as POVMs (positive operator
valued measure) are defined as a collection of positive semidefinite operators
$\PP = \{ M_1, \ldots, M_m \}$ which fulfills the condition $\sum_{i=1}^m M_i =
\1_d$, where $\1_d$ denotes the identity matrix of dimension $d$. When a quantum
state $\rho$ is measured by the measurement $\PP$, then the label $i$ is
obtained with probability $\Tr (E_i \rho)$ and the state $\rho$ ceases to exist.
A special class of quantum measurements are projective von Neumann measurements.
These POVMs have rank-one effects of the form $\{ \proj{u_1}, \ldots, \proj{u_d}
\}$, where vectors $\{\ket{u_i}\}_{i=1}^d$ form an orthonormal basis and 
therefore they are columns of some unitary matrix $U \in\mathcal{U}_d$.

Now we proceed to describing the detailed scheme of certification. There are two
quantum channels: $\Phi_0$ and $\Phi_1$. We are promised that we are given
$\Phi_0$ but we are not sure and we want to verify it using hypothesis testing.
We associate the channel $\Phi_0$ with the null hypothesis $H_0$ and we
associate the other channel $\Phi_1$ with the alternative hypothesis $H_1$. We
consider the following scheme. We are allowed to prepare any (possibly
entangled) input state and perform the given channel on it. Then, we prepare a
binary measurement $\{\Omega_0, \1- \Omega_0\}$ and measure the output state. If
we obtain the label associated with the effect $\Omega_0$, then we decide that
the certified channel was $\Phi_0$ and we accept the null hypothesis. If we get
the label associated with the effect   $\1- \Omega_0$, then we decide that the
certified channel was $\Phi_1$ and therefore we reject the null hypothesis.

The aim of certification is to make a decision whether to accept or to reject
$H_0$. While making such a decision one can come upon two types of errors. The
false positive error (also known as type I error) happens when we reject the
null hypothesis when in fact it was true. The converse situation, that is
accepting the null hypothesis when the alternative hypothesis was correct, is
known as the false negative (or type II) error. In this work we will focus on
the situation when the probability of the false negative error equals zero and
we want to minimize the probability of false positive error.

Let us now take a closer look into the scheme of entanglement-assisted
single-shot certification procedure. We begin with preparing an input state
$\ket{\psi}$ on the compound space. Then, we apply the certified channel
extended by the identity channel on the input state, obtaining as the output the
state either $\rho_0^{\ket{\psi}} = \left( \Phi_0 \otimes \1 \right)
(\proj{\psi})$, if the given channel was $\Phi_0$, or $\rho_1^{\ket{\psi}} =
\left( \Phi_1 \otimes \1 \right) (\proj{\psi})$, if the given channel was
$\Phi_1$. Eventually, we perform the measurement $\{\Omega_0, \1 - \Omega_0 \}$,
where the effect $\Omega_0$ accepts hypothesis $H_0$ and the effect
$\1-\Omega_0$ accepts the alternative hypothesis $H_1$.

Assuming that the input state $\ket{\psi}$ and measurement effect $\Omega_0$
have been fixed, the probability of making the false
positive error is given by
\begin{equation}\label{def_of_p1_conditional_single_shot}
p_1 \left(\ket{\psi}, \Omega_0 \right)
\coloneqq \Tr \left( (\1 - \Omega_0) 
\rho_0^{\ket{\psi}} \right) 
= 1- \Tr \left( \Omega_0 \rho_0^{\ket{\psi}} \right).
\end{equation}
In a similar manner we have the probability of making the false
negative error, that is
\begin{equation}
p_2 \left( \ket{\psi}, \Omega_0 \right) 
\coloneqq \Tr \left( \Omega_0 \rho_1^{\ket{\psi}} \right).
\end{equation}

We will be interested in the situation when probability of  the false
negative error is equal to zero and we want to minimize the probability of false
positive error. Therefore, we introduce the notation
\begin{equation}\label{eq:def_of_p1}
p_1 \coloneqq \min_{\ket{\psi}, \Omega_0} 
\left\{p_1 \left(\ket{\psi}, \Omega_0 \right): \ 
p_2 \left(\ket{\psi}, \Omega_0 \right) = 0
\right\}
\end{equation}
for minimized probability of false positive error in the single-shot scenario.

For a given $\epsilon >0$, we say that quantum channel $\Phi_0$ \emph{can be 
$\epsilon$-certified against} channel
$\Phi_1$
if there exist an input state
$\ket{\psi}$ and measurement effect $\Omega_0$ such that   $p_2
\left(\ket{\psi}, \Omega_0 \right) = 0$ and $p_1 \left(\ket{\psi}, \Omega_0
\right) \leq \epsilon$.
In other words, quantum channel $\Phi_0$ can be $\epsilon$-certified against 
another channel $\Phi_1$ if we can assure no false negative will occur and the
probability of false positive error is smaller than $\epsilon$.

When performing the certification of quantum channels, we can use the channels
many times in various configurations. Now we proceed to introducing notation
needed for studying parallel and adaptive certification schemes.

\subsection{Parallel certification scheme}
Let $N$ denote the number of uses of the quantum channel in parallel. A
schematic representation of the scenario of parallel certification is depicted
in Figure~\ref{fig:parallal_scheme}. In this scheme we consider certifying 
tensor products of the channels.
In other words, parallel certification of channels $\Phi_0$ and $\Phi_1$ can be
seen as certifying channels $\Phi_0^{\otimes N}$ and $\Phi_1^{\otimes N}$ for
some natural number $N$. 
\begin{figure}[h!]
\includegraphics[scale=0.45]{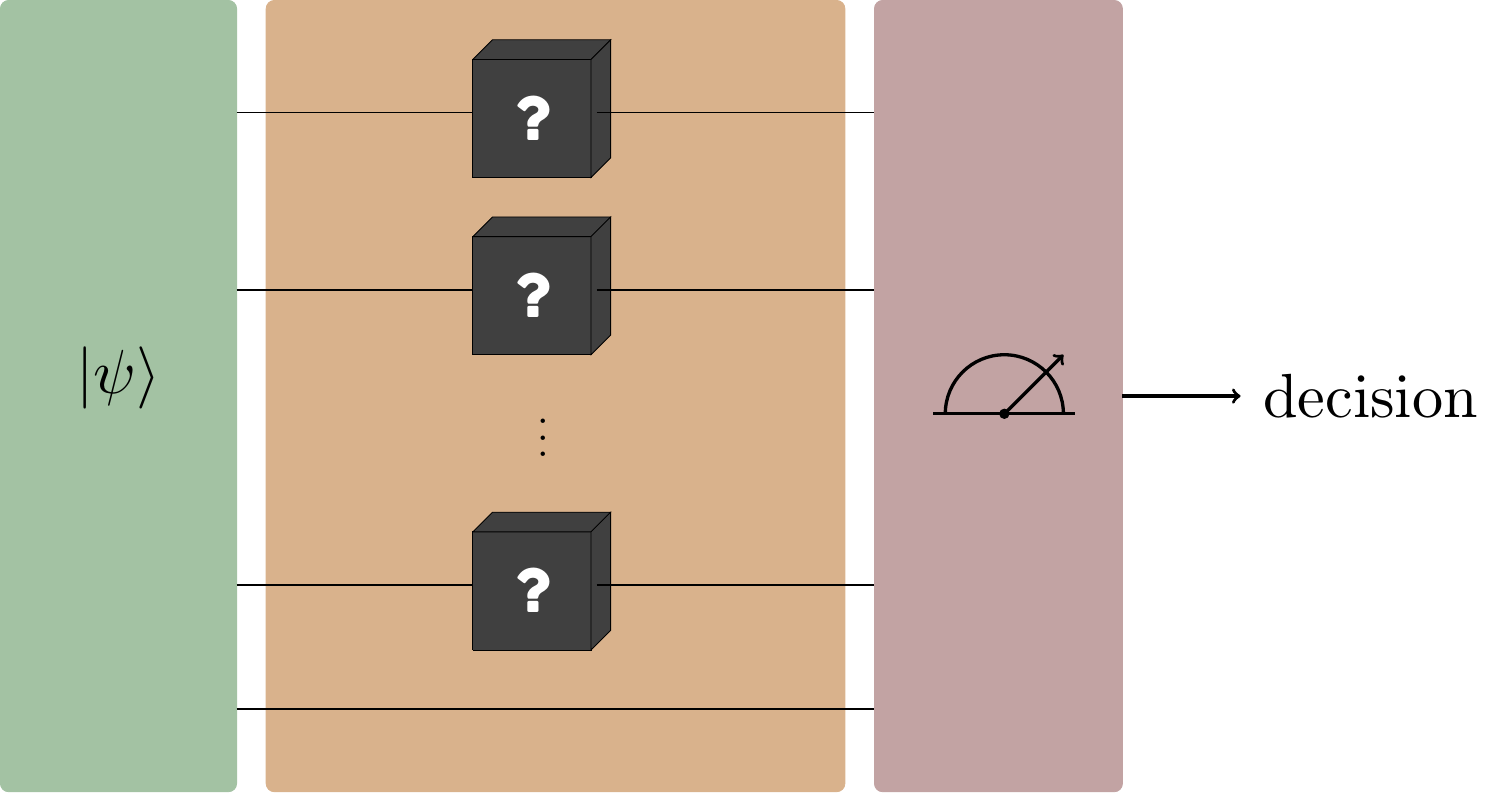}
\caption{Parallel certification scheme}
\label{fig:parallal_scheme}
\end{figure}

Let $\ket{\psi}$ be the input state to the certification procedure. After
applying the channel $\Phi_0$ $N$ times in parallel, we obtain the output state
\begin{equation}
\sigma_0^{N, \ket{\psi}} = \left(\Phi_0^{\otimes N} \otimes \1\right) 
(\proj\psi),
\end{equation}
if the channel was $\Phi_0$, and similarly 
\begin{equation}
\sigma_1^{N, \ket{\psi}} = \left(\Phi_1^{\otimes N} \otimes \1\right) 
(\proj\psi),
\end{equation}
if the channel was $\Phi_1$.
In the same spirit let 
\begin{equation}\label{eq:def_of_errors_in_parallel_scheme}
p_1^{\P, N} \left(\ket{\psi}, \Omega_0 \right) = \Tr \left((\1 - \Omega_0) 
\sigma_0^{N, \ket{\psi}}\right) , 
\quad p_2^{\P, N} \left(\ket{\psi}, \Omega_0 \right) = \Tr\left(\Omega_0 
\sigma_1^{N, \ket{\psi}} \right)
\end{equation}
be the probabilities of occurring false positive and false negative errors 
respectively.
When $N=1$, then we arrive at single-shot certification. Therefore we will
neglect the upper index and simply write 
$p_1 \left(\ket{\psi}, \Omega_0 \right)$ and 
$p_2 \left(\ket{\psi}, \Omega_0 \right)$.

We introduce the notation
\begin{equation}\label{eq:def_of_p1_parallel}
p_1^{\P, N} \coloneqq \min_{\ket{\psi}, \Omega_0} 
\left\{p_1^{\P, N} \left(\ket{\psi}, \Omega_0 \right): \ 
p_2^{\P, N} \left(\ket{\psi}, \Omega_0 \right) = 0
\right\}
\end{equation}
for the minimized probability of false positive error in the parallel scheme.

We say that quantum channel $\Phi_0$ \emph{can be certified against} $\Phi_1$
\emph{in the parallel scheme},
if for every $\epsilon >0$ there
exist a natural number $N$, an input state $\ket{\psi}$ and measurement effect
$\Omega_0$ such that   $p_2^{\P, N} \left(\ket{\psi}, \Omega_0 \right) = 0$ and
$p_1^{\P, N} \left(\ket{\psi}, \Omega_0\right) \leq \epsilon$.

Let us now elaborate a bit on the number of steps needed for certification. 
Assume that we have fixed upper bound on the probability of false positive 
error, $\epsilon >0$. We will be interested in calculating the minimal number 
of queries, $N_\epsilon$, for which $p_2^{\P, N} \left(\ket{\psi}, \Omega_0 \right) = 0$ 
and $p_1^{\P, N} \left(\ket{\psi}, \Omega_0\right) \leq \epsilon$ for some 
input state $\ket{\psi}$ and measurement effect $\Omega_0$. Such a number, 
$N_\epsilon$, will be called the \emph{minimal number of steps needed for 
parallel certification}.

\subsection{Adaptive certification scheme}
Adaptive certification scheme allows for the use of processing between the uses 
of the certified channel, therefore this procedure is more complex then the 
parallel certification. However, when the processings only swap the subsystems, 
then the adaptive scheme may reduce to the parallel one.

\begin{figure}[h!]
\includegraphics[scale=0.45]{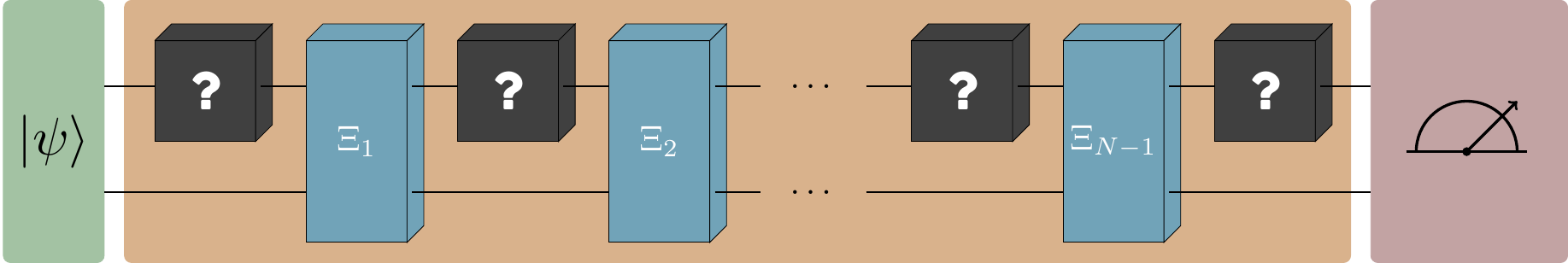}
\caption{Adaptive certification scheme. The processings $\Xi_1, \ldots, 
\Xi_{N-1}$ can be arbitrary quantum channels where we only assume that the 
first subsystem must fit the input of the black box. In particular, the 
processings can swap the subsystems, and therefore one can obtain the parallel 
scheme as a special case of adaptive discrimination scheme.}
\label{fig:adaptive_scheme}
\end{figure} 
Assume as previously that $\ket{\psi}$ is the input state to the certification 
procedure in which the certified channel in used $N$ times and any processing 
is allowed between the uses of this channel. 
The scheme of this procedure is presented in the 
Figure~\ref{fig:adaptive_scheme}.
Having the input state $\ket\psi$ on the compound register, we perform the
certified channel (denoted by the black box with question mark) on one part of
it. Having the output state we can perform some processing $\Xi_1$ and therefore
get prepared for the next use of the certified channel. Than again, we apply the
certified channel on one register of the prepared state and again, we can
perform processing $\Xi_2$. We repeat this procedure $N-1$ times. After the
$N$-th use of the certified channel we obtain the state either $\tau_0^{N,
\ket{\psi}}$, if the channel was $\Phi_0$, or $\tau_1^{N, \ket{\psi}}$, if the
channel was $\Phi_1$. Then,  we prepare a global measurement $\{\Omega_0, \1 -
\Omega_0\}$ and apply it on the output state. 
Let
\begin{equation}
p_1^{\AA, N} \left(\ket{\psi}, \Omega_0 \right) = \Tr \left((\1 - \Omega_0) 
\tau_0^{N, \ket{\psi}}\right) , 
\quad p_2^{\AA, N} \left(\ket{\psi}, \Omega_0 \right) = \Tr\left(\Omega_0 
\tau_1^{N, \ket{\psi}} \right)
\end{equation}
be the probabilities of the false positive and false negative errors in adaptive
scheme, respectively, when the input state and the measurement effects
were fixed. When $N=1$, then we will neglect the upper index and simply write
$p_1 \left(\ket{\psi}, \Omega_0 \right)$ and $p_2 \left(\ket{\psi}, \Omega_0
\right)$.

We say that quantum channel $\Phi_0$ \emph{can be certified against} $\Phi_1$
\emph{in the adaptive scheme},
if for every $\epsilon >0$ there
exist a natural number $N$, an input state $\ket{\psi}$ and measurement effect
$\Omega_0$ such that   $p_2^{\AA, N} \left(\ket{\psi}, \Omega_0 \right) = 0$ and
$p_1^{\AA, N} \left(\ket{\psi}, \Omega_0\right) \leq \epsilon$.

For a fixed upper bound on the probability of false positive error, $\epsilon$, 
we introduce the \emph{minimal number of steps needed for adaptive 
certification}, $N_\epsilon$, as the minimal number of steps after which  
$p_2^{\AA, N} \left(\ket{\psi}, \Omega_0 \right) = 0$ 
and $p_1^{\AA, N} \left(\ket{\psi}, \Omega_0\right) \leq \epsilon$ for some 
input state $\ket{\psi}$ and measurement effect $\Omega_0$. 

\section{Parallel certification}\label{sec:condition_parallel_certification}

Not all quantum channels can be discriminated perfectly after a finite number of
queries. Conditions for perfect discrimination were states in the
work~\cite{duan2009perfect}. Similar conditions for unambiguous discrimination
were proved in~\cite{wang2006unambiguous}. In this section we will complement
these results with the condition concerning parallel certification. 
More specifically, we will prove a simple necessary and sufficient condition 
when a quantum channel $\Phi_0$ can be certified against 
some other channel $\Phi_1$. 
As the condition utilizes the notion of the support of a quantum channel, 
recall that it is defined as the span of their Kraus operators.  
The condition will be stated as
Theorem~\ref{th:condition_for_parallel_certification}, however its proof will be
presented after introducing two technical lemmas.

In fact, the statement of Theorem~\ref{th:condition_for_parallel_certification} 
is a bit more general, that is it concerns the situation when the alternative 
hypothesis corresponds to  a set of channels $ \left\{ \Phi_1, 
\ldots , \Phi_m \right\}$ having 
Kraus operators 
$\left\{ F^{(1)}_{{j_1}} \right\}_{{j_1}},
\ldots, 
\left\{ F^{(m)}_{{j_m}} \right\}_{{j_m}}$
respectively.
More precisely, the alternative hypothesis corresponds to the situation when 
every black box contains one of the channels 
$ \left\{ \Phi_1, \ldots , \Phi_m \right\}$ but not necessarily the same.
We will use the notation 
$\supp \left(\Phi_1, \ldots , \Phi_m \right) \coloneqq 
\SPAN \left\{ F^{(1)}_{{j_1}} , \ldots ,  F^{(m)}_{{j_m}} 
\right\}_{ {j_1}, \ldots,{j_m}}$.

\begin{theorem}\label{th:condition_for_parallel_certification}
Quantum channel $\Phi_0$ can be certified against quantum channels $\Phi_1, 
\ldots, \Phi_m$ in
the parallel scheme
if and only if $\supp(\Phi_0) \not\subseteq \supp \left(\Phi_1, \ldots, \Phi_m 
\right)$. 

Moreover, to ensure that the probability of false 
positive error is no greater than $\epsilon$, the number of steps needed for 
parallel certification is bounded by $N_\epsilon  \geq \left\lceil \frac{\log 
\epsilon}{\log p_1}\right\rceil$, 
where $p_1$ is the upper bound on probability of false positive error in 
single-shot certification.
\end{theorem}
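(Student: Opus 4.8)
The plan is to prove both directions of the equivalence together with the quantitative bound, treating the single-channel case $m=1$ first and then noting that the general case follows by replacing $\Phi_1$ with the ``direct sum'' channel whose Kraus operators are the union $\{F^{(1)}_{j_1}\} \cup \cdots \cup \{F^{(m)}_{j_m}\}$, since $\supp(\Phi_1,\ldots,\Phi_m)$ is exactly the span of this union and a measurement that excludes false negatives against each $\Phi_i$ simultaneously is precisely one that excludes it against this combined channel. So fix $\Phi_0$ with Kraus operators $\{E_i\}$ and $\Phi_1$ with Kraus operators $\{F_j\}$.

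For the \emph{necessity} direction, suppose $\supp(\Phi_0) \subseteq \supp(\Phi_1)$. I would argue that then $\supp(\Phi_0^{\otimes N}) \subseteq \supp(\Phi_1^{\otimes N})$ for every $N$: the Kraus operators of $\Phi_0^{\otimes N}$ are tensor products $E_{i_1} \otimes \cdots \otimes E_{i_N}$, each $E_{i_k}$ is a linear combination of the $F_j$, so each such product lies in $\SPAN\{F_{j_1} \otimes \cdots \otimes F_{j_N}\} = \supp(\Phi_1^{\otimes N})$. The key structural fact to invoke (this is what one of the ``two technical lemmas'' promised in the text presumably supplies) is that the condition $p_2^{\P,N}(\ket\psi,\Omega_0)=0$ forces $\Omega_0$ to annihilate the output state $\sigma_1^{N,\ket\psi}$, and that the range of $\sigma_1^{N,\ket\psi}$ (equivalently, of $\sigma_0^{N,\ket\psi}$) depends on $\ket\psi$ only through the support of the channel; so $\supp$ of the two output states coincide whenever the channel supports coincide. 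Hence $\Omega_0$ annihilating $\sigma_1^{N,\ket\psi}$ forces it to annihilate $\sigma_0^{N,\ket\psi}$ as well, giving $p_1^{\P,N}(\ket\psi,\Omega_0) = \Tr\bigl((\1-\Omega_0)\sigma_0^{N,\ket\psi}\bigr) = 1$ for every $N$, $\ket\psi$ and admissible $\Omega_0$. Thus $\Phi_0$ cannot be certified against $\Phi_1$ in the parallel scheme.

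For the \emph{sufficiency} direction, assume $\supp(\Phi_0) \not\subseteq \supp(\Phi_1)$. First I would establish the single-shot claim that $p_1 < 1$: pick an operator $A \in \supp(\Phi_0) \setminus \supp(\Phi_1)$, use it to build (via the Choi–Jamio\l kowski correspondence, or by choosing $\ket\psi$ with suitable Schmidt structure) an input state $\ket\psi$ for which the range of $\rho_0^{\ket\psi}$ is strictly larger than the range of $\rho_1^{\ket\psi}$; then take $\Omega_0$ to be the projector onto the orthogonal complement of $\mathrm{range}(\rho_1^{\ket\psi})$ inside the space, which gives $p_2(\ket\psi,\Omega_0)=0$ while $\Tr(\Omega_0\rho_0^{\ket\psi}) > 0$, so $p_1(\ket\psi,\Omega_0) = 1 - \Tr(\Omega_0\rho_0^{\ket\psi}) < 1$. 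This shows $p_1 < 1$. Now for the amplification: run the single-shot optimal strategy independently on each of the $N$ parallel copies, i.e. take the input $\ket\psi^{\otimes N}$ and the effect $\1 - (\1-\Omega_0)^{\otimes N}$ (accept $H_0$ unless \emph{every} copy would individually reject). On the alternative hypothesis, since $\Omega_0 \rho_1^{\ket\psi} = 0$ we get $(\1-\Omega_0)^{\otimes N}$ acting on $(\rho_1^{\ket\psi})^{\otimes N}$ equal to $(\rho_1^{\ket\psi})^{\otimes N}$, hence $p_2^{\P,N} = \Tr\bigl((\1-(\1-\Omega_0)^{\otimes N})(\rho_1^{\ket\psi})^{\otimes N}\bigr) = 1 - 1 = 0$; and on the null hypothesis $p_1^{\P,N} = \Tr\bigl((\1-\Omega_0)^{\otimes N}(\rho_0^{\ket\psi})^{\otimes N}\bigr) = \bigl(\Tr((\1-\Omega_0)\rho_0^{\ket\psi})\bigr)^N = p_1^N$. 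Requiring $p_1^N \le \epsilon$ and solving for $N$ gives $N \ge \log\epsilon / \log p_1$ (note $\log p_1 < 0$ so the inequality direction is as claimed), hence the bound $N_\epsilon \ge \lceil \log\epsilon/\log p_1\rceil$, and since $p_1 < 1$ this is finite for every $\epsilon > 0$, completing the ``if'' direction and establishing certifiability.

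The main obstacle I anticipate is the precise justification that $p_2^{\P,N}(\ket\psi,\Omega_0)=0$ together with $\supp(\Phi_0^{\otimes N}) \subseteq \supp(\Phi_1^{\otimes N})$ forces $p_1^{\P,N}=1$ — i.e. the claim that the range (image) of the output state $(\Phi\otimes\1)(\proj\psi)$ is controlled by, and only by, the subspace $\supp(\Phi)$, uniformly over purifying ancillas and input states. This is exactly the kind of statement that should be isolated as one of the two technical lemmas; carrying it out cleanly requires expressing $\mathrm{range}\bigl((\Phi\otimes\1)(\proj\psi)\bigr) = \SPAN\{(E_i \otimes \1)\ket\psi\}_i$ and observing that a linear dependence among the $E_i$ (relative to the $F_j$) transfers to the corresponding vectors. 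The rest — the tensor-power support inclusion and the independent-copies amplification — is routine.
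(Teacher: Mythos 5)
Your proposal is correct and follows essentially the same route as the paper: the two technical facts you isolate (that $\mathrm{range}\bigl((\Phi\otimes\1)(\proj\psi)\bigr)=\SPAN\{(E_i\otimes\1)\ket\psi\}_i$ lets one transfer support non-inclusion from channels to output states for full-Schmidt-rank inputs, and conversely that a channel-level support inclusion propagates to all tensor powers and all inputs) are precisely the paper's Lemma~\ref{lm:inclusion_of_supports} and Lemma~\ref{lm:inclusion_of_supports_2}. The only cosmetic differences are that the paper uses a rank-one effect $\proj{\phi_0}$ where you use the full projector onto the orthogonal complement of $\mathrm{range}\bigl(\rho_1^{\ket\psi}\bigr)$, and your amplification step with the effect $\1-(\1-\Omega_0)^{\otimes N}$ is spelled out somewhat more explicitly than in the paper.
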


Before presenting the proof of this theorem we will introduce two lemmas. The
proofs of lemmas are postponed to Appendix~\ref{app:proofs_of_lemmas}.
Lemma~\ref{lm:inclusion_of_supports} states that if the inclusion does not hold
for supports of the quantum channels, then the inclusion also does not hold for
supports of output states assuming that the input state has full Schmidt rank.
The proof of Lemma~\ref{lm:inclusion_of_supports} is based on the proof 
in~\cite[Theorem~$1$]{wang2006unambiguous}, which studies unambiguous
discrimination among quantum operations.

\begin{lemma}\label{lm:inclusion_of_supports}
Let $\{\ket{a_t}\}_t$ and $\{\ket{b_t}\}_t$ be two orthonormal bases and 
$\ket{\psi} \coloneqq \sum_t \lambda_t \ket{a_t} \ket{b_t}$ where 
$\lambda_t>0$ for every $t$. 
Let also $\rho_0^{\ket{\psi}} = \left(\Phi_0 \otimes \1\right) (\proj\psi)$ and 
$\rho_j^{\ket{\psi}} = \left(\Phi_j \otimes \1\right) (\proj\psi)$ for $j=1, 
\ldots,  m$. 
If $\supp(\Phi_0) \not\subseteq \supp(\Phi_1, \ldots ,\Phi_m)$, then 
$\supp\left(\rho_0^{\ket{\psi}}\right) \not\subseteq 
\supp\left(\rho_1^{\ket{\psi}} , \ldots , \rho_m^{\ket{\psi}}\right)$.
\end{lemma}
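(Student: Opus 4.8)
The plan is to reformulate the inclusion of supports as a statement about the images of two subspaces under one fixed linear map, and then to exploit the fact that full Schmidt rank of $\ket{\psi}$ makes this map injective.

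First I would record the supports of the output states in terms of Kraus operators. Writing $\Phi_0(X) = \sum_i E_i X E_i^\dagger$, the output is
\begin{equation}
\rho_0^{\ket{\psi}} = \sum_i (E_i \otimes \1) \proj{\psi} (E_i^\dagger \otimes \1),
\end{equation}
a sum of rank-one positive operators, so its support (which for a positive semidefinite operator coincides with its range) is $\SPAN\{(E_i \otimes \1)\ket{\psi}\}_i$; this uses the elementary fact that the range of $\sum_i \ketbra{v_i}{v_i}$ is $\SPAN\{\ket{v_i}\}_i$. Applying the same computation to $\Phi_1, \ldots, \Phi_m$ and invoking the definition of the support of a family of states (the span of the positive-eigenvalue eigenvectors, equivalently the span of the union of the ranges), I would obtain
\begin{equation}
\supp(\rho_1^{\ket{\psi}}, \ldots, \rho_m^{\ket{\psi}}) = \SPAN \{ (F^{(k)}_{j_k} \otimes \1)\ket{\psi} \}_{k, j_k}.
\end{equation}

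Next I would introduce the linear map $T(A) \coloneqq (A \otimes \1)\ket{\psi}$ sending operators on the input space to vectors on the compound space. Linearity is immediate, and the key point is injectivity: since
\begin{equation}
T(A) = (A \otimes \1)\ket{\psi} = \sum_t \lambda_t (A\ket{a_t}) \otimes \ket{b_t},
\end{equation}
and the $\ket{b_t}$ are orthonormal hence linearly independent, $T(A) = 0$ forces $\lambda_t A\ket{a_t} = 0$ for each $t$; because every $\lambda_t > 0$ and $\{\ket{a_t}\}_t$ is a basis, this gives $A = 0$. This is exactly where the full Schmidt rank hypothesis enters in an essential way: if some $\lambda_t$ vanished, $T$ would annihilate operators supported on $\ket{a_t}$ and injectivity would fail.

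With $T$ in hand, the two supports above are precisely $T(\supp(\Phi_0))$ and $T(\supp(\Phi_1, \ldots, \Phi_m))$, using that the span of the images equals the image of the span for a linear map. It then remains to observe that an injective linear map preserves non-inclusion of subspaces: choosing $A \in \supp(\Phi_0)$ that witnesses $\supp(\Phi_0) \not\subseteq \supp(\Phi_1, \ldots, \Phi_m)$, we have $T(A) \in T(\supp(\Phi_0))$, and if $T(A) = T(B)$ held for some $B \in \supp(\Phi_1, \ldots, \Phi_m)$, injectivity would give $A = B$, contradicting the choice of $A$. Hence $T(A) \notin T(\supp(\Phi_1, \ldots, \Phi_m))$, which is the claimed non-inclusion $\supp(\rho_0^{\ket{\psi}}) \not\subseteq \supp(\rho_1^{\ket{\psi}}, \ldots, \rho_m^{\ket{\psi}})$. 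The only genuine obstacle is establishing injectivity of $T$ cleanly; once that is done the conclusion is a formal consequence, and no quantitative control of the Schmidt coefficients beyond their strict positivity is required.
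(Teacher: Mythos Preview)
Your proof is correct and follows essentially the same approach as the paper: both arguments identify the supports of the output states with $\SPAN\{(E_i\otimes\1)\ket{\psi}\}_i$ and $\SPAN\{(F^{(k)}_{j_k}\otimes\1)\ket{\psi}\}_{k,j_k}$, and both use orthonormality of $\{\ket{b_t}\}_t$ together with $\lambda_t>0$ to deduce that $(A\otimes\1)\ket{\psi}=0$ forces $A=0$. The paper runs this as an explicit contrapositive computation, whereas you package the same step as injectivity of the map $T(A)=(A\otimes\1)\ket{\psi}$ and then invoke that injective linear maps preserve non-inclusion; the content is identical.
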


Lemma~\ref{lm:inclusion_of_supports_2} also concerns inclusions of supports. It
states that if the inclusion of supports does not hold for some output states,
then it does not hold also for supports of the channels.

\begin{lemma}\label{lm:inclusion_of_supports_2}
With the notation as above, if there exists a natural number $N$ and an input 
state $\ket{\psi}$ such that 
$\supp\left( \sigma_0^{N, \ket{\psi}}
\right) \not\subseteq
\supp \left( \sigma_1^{N, \ket{\psi}}, \ldots, \sigma_m^{N, \ket{\psi}}
\right)$, 
then $\supp (\Phi_0) \not\subseteq \supp \left(\Phi_1, \ldots, \Phi_m\right)$.
\end{lemma}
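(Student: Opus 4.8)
The plan is to prove the contrapositive: assuming $\supp(\Phi_0) \subseteq \supp(\Phi_1, \ldots, \Phi_m)$, I will show that $\supp(\sigma_0^{N,\ket\psi}) \subseteq \supp(\sigma_1^{N,\ket\psi}, \ldots, \sigma_m^{N,\ket\psi})$ for \emph{every} natural number $N$ and every input state $\ket\psi$. This is precisely the negation of the lemma's hypothesis, so it yields the statement. Note that, unlike Lemma~\ref{lm:inclusion_of_supports}, here $\ket\psi$ is arbitrary, so the argument cannot rely on full Schmidt rank and must instead be purely linear-algebraic.

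First I would record how the range of an output state is generated. Using the Kraus form $\Phi_0^{\otimes N}(X) = \sum_{i_1, \ldots, i_N} (E_{i_1}\otimes\cdots\otimes E_{i_N}) X (E_{i_1}\otimes\cdots\otimes E_{i_N})^\dagger$, the state $\sigma_0^{N,\ket\psi}$ is a sum of rank-one terms $\proj{w}$ with $\ket w = (E_{i_1}\otimes\cdots\otimes E_{i_N}\otimes\1)\ket\psi$, so its support is
\[
\supp(\sigma_0^{N,\ket\psi}) = \SPAN\left\{ (E_{i_1}\otimes\cdots\otimes E_{i_N}\otimes\1)\ket\psi \right\}_{i_1, \ldots, i_N}.
\]
The crucial observation is that the assignment $M \mapsto (M\otimes\1)\ket\psi$ is linear, so this span equals $\{(M\otimes\1)\ket\psi : M \in \supp(\Phi_0^{\otimes N})\}$, and likewise for each alternative. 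Because a linear map carries inclusions of subspaces to inclusions of their images, every inclusion between output supports is implied by the corresponding inclusion between the operator subspaces $\supp(\Phi_0^{\otimes N})$ and the $N$-fold supports of $\Phi_1, \ldots, \Phi_m$. This reduces the claim to a statement about spans of Kraus operators alone, independent of $\ket\psi$.

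Next I would establish the tensor-power behaviour of channel supports: since the Kraus operators of $\Phi^{\otimes N}$ are exactly the tensor products of $N$ Kraus operators of $\Phi$, the span they generate satisfies $\supp(\Phi^{\otimes N}) = \supp(\Phi)^{\otimes N}$. Now invoke the hypothesis: each $E_i$ lies in $\SPAN\{F^{(j)}_{k}\}_{j,k}$, the span of all Kraus operators of $\Phi_1, \ldots, \Phi_m$, so I can write $E_i = \sum_{j,k} c_{i,j,k} F^{(j)}_k$. Substituting this into each factor of $E_{i_1}\otimes\cdots\otimes E_{i_N}$ and expanding exhibits every Kraus operator of $\Phi_0^{\otimes N}$ as a linear combination of $N$-fold tensor products of the operators $F^{(j)}_k$. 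Transporting this back through $M \mapsto (M\otimes\1)\ket\psi$ then writes each generating vector of $\supp(\sigma_0^{N,\ket\psi})$ as a linear combination of generating vectors of the alternative output supports, which gives the inclusion and closes the contrapositive.

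The hard part will be the bookkeeping in the multiple-channel case $m>1$: expanding $E_{i_1}\otimes\cdots\otimes E_{i_N}$ produces mixed tensor products $F^{(j_1)}_{k_1}\otimes\cdots\otimes F^{(j_N)}_{k_N}$ whose channel labels $j_1, \ldots, j_N$ need not coincide, and one must argue carefully that all such terms are captured by $\supp(\sigma_1^{N,\ket\psi}, \ldots, \sigma_m^{N,\ket\psi})$. For a single alternative ($m=1$) this difficulty vanishes entirely, since there is only one family of Kraus operators and the inclusion $\supp(\Phi_0)^{\otimes N} \subseteq \supp(\Phi_1)^{\otimes N}$ follows immediately from $\supp(\Phi_0) \subseteq \supp(\Phi_1)$ by taking tensor powers. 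Beyond this point everything is routine, and since only the linearity of $M\mapsto(M\otimes\1)\ket\psi$ was used, the conclusion holds for arbitrary $\ket\psi$, exactly as the lemma requires.
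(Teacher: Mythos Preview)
Your contrapositive approach is exactly the paper's, and for $m=1$ your argument is complete and matches it: write each $E_i$ in the span of the $F_j$, take $N$-fold tensor products, and push the resulting inclusion through the linear map $M\mapsto(M\otimes\1)\ket\psi$.

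The step you flagged as ``the hard part'' for $m>1$ is not mere bookkeeping but a genuine obstruction; the lemma as stated is in fact false for $m\ge 2$. Take $d=2$, let $\Phi_1$ be the identity channel, $\Phi_2(\cdot)=\sigma_z(\cdot)\sigma_z$, and let $\Phi_0$ be the dephasing channel with Kraus operators $\proj{0},\proj{1}$. Then $\supp(\Phi_0)=\SPAN\{\proj{0},\proj{1}\}=\SPAN\{\1,\sigma_z\}=\supp(\Phi_1,\Phi_2)$, yet for $N=2$ and $\ket\psi=\ket{+}\otimes\ket{+}$ (trivial ancilla) one finds $\supp\bigl(\sigma_0^{2,\ket\psi}\bigr)=\C^2\otimes\C^2$ while $\supp\bigl(\sigma_1^{2,\ket\psi},\sigma_2^{2,\ket\psi}\bigr)=\SPAN\{\ket{+}\otimes\ket{+},\ket{-}\otimes\ket{-}\}$. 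The paper's own proof glosses over the same point: it establishes the inclusion into $\SPAN\{(K_{l_1}\otimes\cdots\otimes K_{l_N}\otimes\1)\ket\psi\}$, where the $K_l$ span the combined support and mixed channel labels are allowed, and then asserts without justification that this ``can be rewritten as'' $\supp\bigl(\sigma_1^{N,\ket\psi},\ldots,\sigma_m^{N,\ket\psi}\bigr)$, which is in general strictly smaller. What the paper actually proves, and actually uses in Theorem~\ref{th:condition_for_parallel_certification}, is the version with the mixed-product span on the right; your suspicion was well placed.
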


Finally, we are in position to present the proof of
Theorem~\ref{th:condition_for_parallel_certification}.

\begin{proof}[Proof of Theorem~\ref{th:condition_for_parallel_certification}]
($\impliedby$) Let $\supp(\Phi_0) \not\subseteq \supp \left(\Phi_1, \ldots, 
\Phi_m \right)$.
From Lemma~\ref{lm:inclusion_of_supports} this implies 
$\supp\left(\rho_0^{\ket{\psi}} \right) 
\not\subseteq \supp\left(\rho_1^{\ket{\psi}} , \ldots , 
\rho_m^{\ket{\psi}}\right)$ where the input state is
$\ket{\psi}=\sum_t \lambda_t \ket{a_t} \ket{b_t}$. 
Hence we can always find a state $\ket{\phi_0}$ for which
\begin{equation}
\ket{\phi_0} \not\perp \supp\left(\rho_0^{\ket{\psi}} \right)  \quad 
\textrm{and} \quad 
\ket{\phi_0} \perp \supp\left(\rho_1^{\ket{\psi}} , \ldots , 
\rho_m^{\ket{\psi}}\right),
\end{equation}
and therefore
\begin{equation}
\bra{\phi_0}\rho_0^{\ket{\psi}} \ket{\phi_0} >0  \quad \textrm{and} \quad 
\bra{\phi_0}\rho_i^{\ket{\psi}} \ket{\phi_0} = 0
\end{equation}
for $i = 1 , \ldots, m$.

Now we consider the certification scheme by taking the measurement with effects
$\{\Omega_0, \1 - \Omega_0 \}$. Without loss of generality we can assume that
$\Omega_0 \coloneqq \proj{\phi_0}$ is a rank-one operator.  
We calculate
\begin{eqnarray}
\begin{split}
&\tr \left(\Omega_0 \rho_0^{\ket{\psi}} \right) = \bra{\phi_0} 
\rho_0^{\ket{\psi}} \ket{\phi_0} >0 \\
&p_2 \left( \ket{\psi}, \Omega_0 \right) 
= \sum_{i=1}^{m} \tr \left(\Omega_0 
\rho_i^{\ket{\psi}} \right) 
= \sum_{i=1}^{m} \bra{\phi_0} \rho_i^{\ket{\psi}} \ket{\phi_0} =0  \\
&p_1 \left( \ket{\psi}, \Omega_0 \right) = \tr \left((\1 - 
\Omega_0) \rho_0^{\ket{\psi}} \right) = 1- \bra{\phi_0} \rho_0^{\ket{\psi}} 
\ket{\phi_0}  < 1
\end{split}
\end{eqnarray}
Hence after sufficiently many uses, $N$, of the certified channel in parallel 
(actually when $N \geq \left\lceil \frac{\log \epsilon}{\log p_1}\right\rceil$)
we obtain that $\tr \left(\Omega_1^{\otimes N} 
{\left(\rho_0^{\ket{\psi}}\right)}^{\otimes N} \right) 
\leq 
\epsilon$ 
for any positive $\epsilon$. 
Therefore after $N$ queries we will be able to exclude false negative error.

($\implies$) Assume that $\Phi_0$ can be 
certified against $\Phi_1, \ldots , \Phi_m$ in the parallel 
scenario. This means that there exist a natural number $N$, an input state 
$\ket{\psi}$ and a positive operator (measurement effect) $\Omega_0$ on the 
composite system such that
\begin{equation}
\begin{split}
&p_1^{\P, N} \left( \ket{\psi}, \Omega_0\right) = 1- \tr \left(\Omega_0 \left( 
\Phi_0^{\otimes N} \otimes \1 
\right)( \proj{\psi})\right) \leq  \epsilon <1  \\
&\tr \left(\Omega_0 \left( 
\Phi_{i_1} \otimes \ldots \otimes  \Phi_{i_N} \otimes \1 \right)( 
\proj{\psi})\right) =0, \quad \forall_{i_1, \ldots , i_N \in \{1, \ldots,m\}}. 
\end{split}
\end{equation}
Therefore $\tr \left(\Omega_0 \left( \Phi_0^{\otimes N} \otimes \1 
\right)( \proj{\psi})\right) >0$ and thus
\begin{equation}
\begin{split}
\Omega_0 &\not\perp \supp \left(\left( \Phi_0^{\otimes N} \otimes \1 
\right)\left( 
\proj{\psi}\right)\right)  
= \SPAN \left\{ \left( E_{i_1} \otimes \ldots \otimes E_{i_N} \otimes \1 
\right)\ket{\psi}\right\}_{i_1, \ldots, i_N}
\\
\Omega_0 &\perp 
\SPAN \left\{ \left( K_{l}^{\otimes N} \otimes \1 
\right)\ket{\psi}\right\}_{l_1, \ldots, l_N},
\end{split}
\end{equation}
where $\{K_l\}_l = \left\{ F^{(1)}_{{j_1}} , \ldots ,  
F^{(m)}_{{j_m}} \right\}_{ {j_1}, \ldots,{j_m}} $ is the set of all Kraus 
operators of channels $\Phi_i$ from the alternative hypothesis.

Hence
\begin{equation}\label{eq:to_be_contradicted}
\SPAN \left\{ \left( E_{i_1} \otimes \ldots \otimes E_{i_N} \otimes \1 
\right)\ket{\psi}\right\}_{i_1, \ldots, i_N}
\not\subseteq
\SPAN \left\{ \left( K_{l_1} \otimes \ldots \otimes K_{l_N} \otimes \1 
\right)\ket{\psi}\right\}_{l_1, \ldots, l_N}.
\end{equation}
The reminder of the proof follows directly from 
Lemma~\ref{lm:inclusion_of_supports_2}.
\end{proof}

It is worth mentioning that in the above proof the measurement effect 
$\Omega_0$ is a rank-one projection operator. This is sufficient to prove 
that quantum channel $\Phi_0$ can be certified against $\Phi_1$  
in the parallel scheme, but this is, in most of the cases, not optimal. 

\begin{remark}
If we consider the hypothesis testing scenario when we have to decide whether 
the unknown channel is either $\Phi_0$ or some fixed $\Phi_i$, with $\Phi_i 
\in  \{ \Phi_1, \ldots, \Phi_m\}$ (\ie we have either $\Phi_0^{\otimes N}$ or 
$\Phi_i^{\otimes N}$),
then the quantum channel $\Phi_0$ can be certified against $\Phi_{1}, \ldots , 
\Phi_m$ 
in the parallel scheme if and only if
\begin{equation}
\supp \left(  \Phi_0 \right)
\not\subseteq \supp \left(  \Phi_j \right)
\end{equation}
for every $j \in \{1, \ldots, m\}$.
\end{remark}

This remark follows directly from considering certification with simple 
alternative hypotheses.

In the remainder of this section we will discuss two examples. The first 
example shows that if quantum channels can be certified in the parallel scheme, 
then it does not have to imply that they can be discriminated unambiguously.
We will provide an explicit example of mixed-unitary channels which fulfill the 
condition from Theorem~\ref{th:condition_for_parallel_certification}, and 
therefore can be certified in the parallel scheme, but cannot be discriminated 
unambiguously.
In the second example we will consider the situation when the channel 
associated with the $H_1$ hypothesis is the identity channel and derive an 
upper bound on the probability of false positive error.

\subsection{Channels which cannot be discriminated unambiguously but still can 
be certified.}
In this subsection we will give an example of a class of channels which cannot 
be discriminated unambiguously, 
but they can be certified by a finite number of uses in the parallel scheme. 
The work~\cite{wang2006unambiguous} presents the condition when quantum 
channels can be unambiguously discriminated by a finite number of uses. More 
precisely, Theorem~2 therein states that if a set of quantum channels 
$\mathcal{S} = \{\Phi_i\}_i$ satisfies the condition  $\supp (\Phi_i) 
\not\subseteq \supp(\Phi_j)$ for every $\Phi_i, \Phi_j \in \mathcal{S}$, then 
they can be discriminated unambiguously in a finite number of uses.

Now we proceed to presenting our example.
Let $\Phi_0$ be a mixed unitary channel of the form
\begin{equation}
\Phi_0(\rho) = \sum_{i=1}^m p_i U_i \rho U_i^\dagger,
\end{equation}
where $p = (p_1, \ldots, p_m)$ is a probability vector and 
$\{  U_1, \ldots, U_m\}$ are unitary matrices.  
As the second channel we take a unitary channel of the form 
$\Phi_1(\rho) = \tilde{U} \rho \tilde{U}^\dagger$, where we make a crucial 
assumption that $\tilde{U} \in \{  U_1, \ldots, U_m\}$.

Therefore we have $\supp(\Phi_0) = \SPAN \{\sqrt{p_i}U_i \}_i$, 
while $\supp(\Phi_1) = \SPAN \{\tilde{U}\}$.
In this example it can be easily seen that the condition for unambiguous 
discrimination is not fulfilled as 
$\supp (\Phi_1) \subseteq \supp(\Phi_0)$. 
Nevertheless, the condition from 
Theorem~\ref{th:condition_for_parallel_certification} is fulfilled as 
$\supp (\Phi_0) \not\subseteq \supp(\Phi_1)$, and hence it is possible to 
exclude false negative error after a finite number of queries in parallel.

\subsection{Certification of arbitrary channel against the identity channel}
Assume that we want to certify channel $\Phi_0$, which Kraus operators are 
$\{E_i\}_i$, against the identity channel $\Phi_1$ having Kraus operator 
$\{\1\}$. We will show that as long as the channel $\Phi_0$ is not the 
identity channel, it can always be certified against the identity channel in 
the parallel scheme.

\begin{proposition}
Every quantum channel (except the identity channel) can be certified 
against the identity channel in the parallel scheme.
\end{proposition}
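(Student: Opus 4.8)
The plan is to apply Theorem~\ref{th:condition_for_parallel_certification} with the single alternative channel $\Phi_1 = \1$, so it suffices to verify the support condition $\supp(\Phi_0) \not\subseteq \supp(\Phi_1)$. Since $\Phi_1$ is the identity channel with the single Kraus operator $\1$, we have $\supp(\Phi_1) = \SPAN\{\1\}$, the one-dimensional space of scalar multiples of the identity matrix. Thus the support condition fails precisely when $\supp(\Phi_0) \subseteq \SPAN\{\1\}$, i.e.\ when every Kraus operator $E_i$ of $\Phi_0$ is a scalar multiple of $\1$.

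The key step is then to show that if every Kraus operator of $\Phi_0$ is proportional to $\1$, then $\Phi_0$ must be the identity channel. Writing $E_i = c_i \1$ for scalars $c_i \in \C$, the trace-preserving condition $\sum_i E_i^\dagger E_i = \1$ becomes $\left(\sum_i |c_i|^2\right)\1 = \1$, hence $\sum_i |c_i|^2 = 1$. Then for any $X$ we compute $\Phi_0(X) = \sum_i E_i X E_i^\dagger = \sum_i |c_i|^2 X = X$, so $\Phi_0 = \1$. Contrapositively, if $\Phi_0 \neq \1$ then not all Kraus operators are scalar multiples of $\1$, so $\supp(\Phi_0) \not\subseteq \supp(\Phi_1)$, and Theorem~\ref{th:condition_for_parallel_certification} yields certifiability in the parallel scheme. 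One should remark that the argument is independent of the choice of Kraus representation: although Kraus operators are only defined up to a unitary mixing, the span $\supp(\Phi_0)$ is representation-independent (a standard fact, since two Kraus sets of the same channel are related by an isometry), so the dichotomy "$\supp(\Phi_0) = \SPAN\{\1\}$ or not" is well posed.

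I do not anticipate a genuine obstacle here; the proposition is essentially an immediate specialization of the main theorem. The only point requiring a sentence of care is the claim that Kraus operators all proportional to $\1$ forces $\Phi_0 = \1$ — but as shown above this is a one-line consequence of trace preservation. If one wants to be slightly more self-contained regarding representation-independence of the support, one can invoke that the support of $\Phi_0$ equals the (range of the) column space of the Choi matrix reshaped appropriately, or simply cite the definition of $\supp$ used in the preliminaries together with the fact that the paper has already been applying this notion in a representation-independent way throughout Section~\ref{sec:condition_parallel_certification}.
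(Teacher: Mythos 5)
Your proof is correct, but it takes a different route from the paper's. You specialize Theorem~\ref{th:condition_for_parallel_certification}: since $\supp(\1)=\SPAN\{\1\}$ is one-dimensional, the support condition can only fail when every Kraus operator of $\Phi_0$ is a scalar multiple of the identity, and trace preservation then forces $\sum_i|c_i|^2=1$ and hence $\Phi_0=\1$; your side remark on representation-independence of $\supp(\Phi_0)$ is the right point to flag and is correctly disposed of. The paper instead argues constructively and does not invoke the theorem at all: it fixes an input state $\ket{\psi}$, takes the effect $\Omega_0=\1-\proj{\psi}$ (which automatically annihilates the identity channel's output $\proj{\psi}$, killing the false negative error), and computes $p_1(\ket{\psi},\Omega_0)=\bra{\psi}\left((\Phi_0\otimes\1)(\proj{\psi})\right)\ket{\psi}<1$. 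The trade-off is clear: the paper's construction yields the explicit single-shot quantity --- the fidelity between the input and the channel output --- which it then minimizes over $\ket{\psi}$ and connects to the numerical range $\nu(U)$ in the unitary case, whereas your argument gives existence only, with no quantitative handle on $p_1$ or on $N_\epsilon$ beyond what the theorem's generic bound provides. On the other hand, your route is logically tighter on one point: the paper's strict inequality ``$<1$'' for an arbitrary $\ket{\psi}$ implicitly requires choosing $\ket{\psi}$ with full Schmidt rank (e.g.\ maximally entangled), since for a poorly chosen input one can have $(\Phi_0\otimes\1)(\proj{\psi})=\proj{\psi}$ even when $\Phi_0\neq\1$; your support-based argument sidesteps this because the input-state choice is already handled inside Lemma~\ref{lm:inclusion_of_supports}.
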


\begin{proof}
Let $\ket{\psi}$ be an input state. After applying the certified channels on 
it, we obtain the state either 
$\rho_0^{\ket{\psi}} = \left(\Phi_0 \otimes \1\right)\left(\proj{\psi}\right)$, 
if the 
channel was $\Phi_0$, or 
$\rho_1^{\ket{\psi}} = \proj{\psi}$, if the channels was $\Phi_1$. 
As the final measurement effect we can take 
$\Omega_0 \coloneqq \1 - \proj{\psi}$, which is always orthogonal to 
$\rho_1^{\ket{\psi}}$, hence no false negative error will occur.
Having the input state and final measurement fixed, we will calculate the 
probability of false positive error in the single-shot scheme
\begin{equation}\label{eq:p1_certifying_from_identity}
\begin{split}
p_1 (\ket{\psi}, \Omega_0) 
&= 1- \Tr\left(\Omega_0 \rho_0^{\ket{\psi}} \right)
= 1-\Tr\left(\left(\1 - \proj{\psi}\right) \rho_0^{\ket{\psi}} \right)
=\Tr\left(\proj{\psi} \rho_0^{\ket{\psi}} \right) \\
&= \bra{\psi} \left(\left(\Phi_0\otimes \1\right) 
(\proj{\psi}) \right)   \ket{\psi} <1,
\end{split}
\end{equation}
where the last inequality follows from the fact that $\Phi_0$ is not the 
identity channel.
Therefore, after sufficiently many queries in the parallel scheme the 
probability of false positive error will be arbitrarily small.
\end{proof}

Note that the expression for the probability of false positive error in 
Eq.~\eqref{eq:p1_certifying_from_identity} is in fact the fidelity between the 
input state and the output of the channel $\Phi_0$ extended by the identity 
channel. As we were not imposing any specific assumptions on the input state, 
we can take the one which minimizes the expression in 
Eq.~\eqref{eq:p1_certifying_from_identity}. Therefore,  the 
probability of the false positive error in the single-shot certification yields
\begin{equation}
p_1 = \min_{\ket{\psi}}
\bra{\psi} \left(\left(\Phi_0\otimes \1\right) 
(\proj{\psi}) \right)   \ket{\psi}.
\end{equation}
Eventually, to make sure that the probability of false positive error will not 
be greater than $\epsilon$, we will need $N_\epsilon \geq \left\lceil  
\frac{\log \epsilon}{\log p_1} \right\rceil$ steps in the parallel scheme.

From the above considerations we can draw a simple conclusion concerning the 
situation when $\Phi_0(X)= UX U^\dagger$ is a unitary channel. 
Then, as the unitary channel has only one Kraus operator, it holds that  $p_1 
=\min_{\ket{\psi}} \left\vert \bra{\psi} \left( U \otimes \1 \right) \ket{\psi} 
\right\vert^2 = \min_{\ket{\psi}} \left\vert \bra{\psi}  U  \ket{\psi} 
\right\vert^2 = \nu^2(U)$, where $\nu(U)$ is the distance from zero to the 
numerical range of the matrix 
$U$~\cite{puchala2021multiple,lewandowska2021optimal}. 
Thanks to this geometrical representation (see 
further~\cite{puchala2021multiple}) 
one can deduce the connection between the 
probability of false positive error, $p_1$,  and the probability of making an 
error in the unambiguous discrimination of unitary channels. More specifically, 
let $p^u_{\mathrm{error}}$ denote the probability of making an erroneous 
decision in unambiguous discrimination of unitary channels. Then, it holds that 
$p^u_{\mathrm{error}} = p_1^2$. Therefore, in the case of certification of 
unitary channels the probability of making the false positive error 
is significantly smaller than the probability of erroneous unambiguous 
discrimination.

\section{Certification of quantum 
measurements}\label{sec:certification_of_povms}

In this section we will take a closer look into the certification of quantum 
measurements. 
We will begin with general POVMs and later focus on the class of 
measurements with rank-one effects. 
Before stating the results, let us recall that every quantum measurement can be 
associated with quantum-classical channel defined as 
\begin{equation}
\PP(\rho) = \sum_i \tr (M_i \rho) \proj{i},
\end{equation}
where $\{M_i\}_i$ are measurement's effects and $\tr (M_i \rho)$ is the 
probability of obtaining the $i$-th label.

The following proposition can be seen as a corollary from 
Theorem~\ref{th:condition_for_parallel_certification} as it gives a simple 
condition when we forbid false negative error. This condition is expressed in 
terms of inclusion of supports of the measurements' effects.

\begin{proposition}\label{prop:certification_of_POVMs}
Let $\PP_0$ and $\PP_1$ be POVMs with effects $\{M_i\}_{i=1}^m$
and $\{N_i\}_{i=1}^m$ respectively. 
Then $\PP_0$ can be certified against $\PP_1$ in the parallel scheme
if and only if there exists a pair of effects $M_i$, $N_i$ for which 
$\supp(M_i) 
\not\subseteq \supp(N_i)$.
\end{proposition}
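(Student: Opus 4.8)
The plan is to reduce Proposition~\ref{prop:certification_of_POVMs} to Theorem~\ref{th:condition_for_parallel_certification} by passing to the quantum--classical channels associated with the POVMs. Write $\PP_0$ and $\PP_1$ for the channels $\PP_0(\rho)=\sum_i \tr(M_i\rho)\proj{i}$ and $\PP_1(\rho)=\sum_i\tr(N_i\rho)\proj{i}$. The first step is to exhibit a convenient Kraus representation for such a channel: writing the spectral decomposition $M_i=\sum_k \mu_{i,k}\proj{m_{i,k}}$ with $\mu_{i,k}>0$, the operators $\sqrt{\mu_{i,k}}\,\ketbra{i}{m_{i,k}}$ form a Kraus family for $\PP_0$, and analogously $\sqrt{\nu_{i,k}}\,\ketbra{i}{n_{i,k}}$ for $\PP_1$. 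Consequently $\supp(\PP_0)=\SPAN\{\,\ketbra{i}{m_{i,k}} : \mu_{i,k}>0\,\}=\bigoplus_i \ket{i}\otimes \supp(M_i)^{*}$ (viewing the second tensor factor as row vectors), and similarly for $\PP_1$; the key point is that the index $\ket{i}$ lives in an orthonormal basis, so the supports decompose as direct sums over the label $i$.

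The second step is the elementary linear-algebra observation that, for subspaces of the form $\bigoplus_i \ket{i}\otimes V_i$ and $\bigoplus_i \ket{i}\otimes W_i$, one has $\bigoplus_i\ket{i}\otimes V_i \subseteq \bigoplus_i\ket{i}\otimes W_i$ if and only if $V_i\subseteq W_i$ for every $i$. Hence $\supp(\PP_0)\not\subseteq\supp(\PP_1)$ if and only if there is some label $i$ with $\supp(M_i)\not\subseteq\supp(N_i)$. Combining this equivalence with Theorem~\ref{th:condition_for_parallel_certification} applied to the channels $\PP_0,\PP_1$ (the case $m=1$ of the alternative-hypothesis family) gives exactly the stated characterization: $\PP_0$ can be certified against $\PP_1$ in the parallel scheme iff $\supp(\PP_0)\not\subseteq\supp(\PP_1)$ iff $\supp(M_i)\not\subseteq\supp(N_i)$ for some $i$.

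I expect the only real subtlety to be bookkeeping with the conjugate/transpose that appears because Kraus operators act as $E X E^\dagger$ while the effects $M_i$ sit inside traces; since $\supp(M_i)$ and $\supp(M_i^{*})=\supp(\overline{M_i})$ have the same dimension and the direct-sum criterion is insensitive to this conjugation, it does not affect the statement, but I would state it cleanly to avoid confusion. Everything else is a direct substitution into the already-proved theorem, so there is no genuine obstacle; the proposition is, as the text says, essentially a corollary.
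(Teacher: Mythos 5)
Your proposal is correct and follows essentially the same route as the paper: spectral-decompose the effects, read off the Kraus operators $\sqrt{\alpha_{k_i}^i}\,\ketbra{i}{x_{k_i}^i}$ of the quantum--classical channels, and invoke Theorem~\ref{th:condition_for_parallel_certification}. If anything, you are more careful than the paper, which simply asserts the final equivalence between the span non-inclusion and the per-label condition $\supp(M_i)\not\subseteq\supp(N_i)$, whereas you justify it via the direct-sum decomposition over the orthonormal labels $\ket{i}$ and flag the (harmless) conjugation issue.
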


\begin{proof}
Let
\begin{equation}
M_i = \sum_{k_i} \alpha_{k_i}^i \proj{x_{k_i}^i}
\end{equation}
be the spectral decomposition of $M_i$ (where $\alpha_{k_i}^i>0$ for every 
$k$).
Then 
\begin{equation}
\begin{split}
\PP_0 (\rho) 
= \sum_i \proj{i} \tr(M_i \rho)
= \sum_i \sum_{k_i} \alpha_{k_i}^i \ketbra{i}{x_{k_i}^i} \rho 
\ketbra{x_{k_i}^i}{i}
\end{split}
\end{equation}
and hence the Kraus operators of $\PP_0$ are 
$\left\{ \sqrt{\alpha_{k_i}^i} \ketbra{i}{x_{k_i}^i} \right\}_{k_i,i}$.
Analogously, the Kraus operators of $\PP_1$ are
$\left\{ \sqrt{\beta_{k_i}^i} \ketbra{i}{y_{k_i}^i} \right\}_{k_i,i}$.

Therefore from Theorem~\ref{th:condition_for_parallel_certification} we have 
that $\PP_0$ can be certified against $\PP_1$ in the parallel scheme if and 
only if 
\begin{equation}
\SPAN \left\{ \sqrt{\alpha_{k_i}^i} \ketbra{i}{x_{k_i}^i} \right\}_{k_i,i} 
\not\subseteq
\SPAN \left\{ \sqrt{\beta_{k_i}^i} \ketbra{i}{y_{k_i}^i} \right\}_{k_i,i},
\end{equation}
that is when there exists a pair of effects $M_i$, $N_i$ for which $\supp(M_i) 
\not\subseteq \supp(N_i)$.
\end{proof}
The above proposition holds for any pair of quantum measurements.
In the case of POVMs with rank-one effects, the above condition can still be 
simplified to linear independence of vectors. This is stated as the following 
corollary.

\begin{corollary}\label{corr:certification_POVMs_with_rank_one_effects}
Let $\PP_0$ and $\PP_1$ be measurements with effects $\{ \alpha_i 
\proj{x_i}\}_{i=1}^m$ and $\{ \beta_i\proj{y_i}\}_{i=1}^m$  for $\alpha_i, 
\beta_i \in (0, 1]$, respectively.
Then $\PP_0$ can be certified against $\PP_1$ in the parallel scheme 
if and only if there exists a pair of vectors 
$\ket{x_i}$, $\ket{y_i}$  which are linearly independent.
\end{corollary}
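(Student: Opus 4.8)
The plan is to derive this as a direct specialization of Proposition~\ref{prop:certification_of_POVMs}. Since the effects are rank-one, $M_i = \alpha_i \proj{x_i}$ with $\alpha_i \in (0,1]$, we have $\supp(M_i) = \SPAN\{\ket{x_i}\}$, a one-dimensional subspace, and likewise $\supp(N_i) = \SPAN\{\ket{y_i}\}$. So the condition from Proposition~\ref{prop:certification_of_POVMs}, namely that there exists an index $i$ with $\supp(M_i) \not\subseteq \supp(N_i)$, becomes the statement that for some $i$, the line spanned by $\ket{x_i}$ is not contained in the line spanned by $\ket{y_i}$.

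The core observation is then elementary: for nonzero vectors $\ket{x_i}, \ket{y_i}$, the inclusion $\SPAN\{\ket{x_i}\} \subseteq \SPAN\{\ket{y_i}\}$ of one-dimensional spaces holds if and only if $\ket{x_i}$ is a scalar multiple of $\ket{y_i}$, which is exactly the statement that $\{\ket{x_i}, \ket{y_i}\}$ is linearly \emph{dependent}. Negating, $\supp(M_i) \not\subseteq \supp(N_i)$ if and only if $\ket{x_i}$ and $\ket{y_i}$ are linearly independent. Combining with the previous paragraph, $\PP_0$ can be certified against $\PP_1$ in the parallel scheme if and only if there exists a pair $\ket{x_i}, \ket{y_i}$ that is linearly independent.

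I would write the proof in essentially two sentences: first invoke Proposition~\ref{prop:certification_of_POVMs} with the explicit one-dimensional supports, then reduce the subspace inclusion of lines to (linear) dependence of the spanning vectors, and negate. One small point worth a clause: since $\alpha_i, \beta_i > 0$, the scalars do not affect the supports, so they can be ignored throughout; and since all $\ket{x_i}$, $\ket{y_i}$ are (up to normalization) unit vectors and in particular nonzero, the characterization of containment of lines via proportionality is valid. There is no real obstacle here — the only thing to be careful about is making explicit that "$\supp(M_i) \not\subseteq \supp(N_i)$ for one-dimensional supports" is literally synonymous with "linear independence of $\ket{x_i}, \ket{y_i}$," which is where the word "linearly independent" in the statement comes from.

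\begin{proof}
Since the effect $M_i = \alpha_i \proj{x_i}$ has rank one with $\alpha_i > 0$, its support is the one-dimensional space $\supp(M_i) = \SPAN\{\ket{x_i}\}$, and analogously $\supp(N_i) = \SPAN\{\ket{y_i}\}$. By Proposition~\ref{prop:certification_of_POVMs}, $\PP_0$ can be certified against $\PP_1$ in the parallel scheme if and only if there exists an index $i$ with $\SPAN\{\ket{x_i}\} \not\subseteq \SPAN\{\ket{y_i}\}$. For nonzero vectors, the inclusion $\SPAN\{\ket{x_i}\} \subseteq \SPAN\{\ket{y_i}\}$ of these one-dimensional subspaces holds precisely when $\ket{x_i}$ is a scalar multiple of $\ket{y_i}$, i.e.\ precisely when $\ket{x_i}$ and $\ket{y_i}$ are linearly dependent. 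Negating, $\SPAN\{\ket{x_i}\} \not\subseteq \SPAN\{\ket{y_i}\}$ if and only if $\ket{x_i}$ and $\ket{y_i}$ are linearly independent. Hence $\PP_0$ can be certified against $\PP_1$ in the parallel scheme if and only if there exists a pair of effects $\ket{x_i}$, $\ket{y_i}$ which are linearly independent.
\end{proof}
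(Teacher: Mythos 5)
Your proposal is correct and is exactly the argument the paper intends: the corollary is presented as an immediate specialization of Proposition~\ref{prop:certification_of_POVMs}, with the supports of the rank-one effects being the lines spanned by $\ket{x_i}$ and $\ket{y_i}$, and non-inclusion of lines being synonymous with linear independence of the (nonzero) spanning vectors. No gaps; your explicit remark that the positive scalars $\alpha_i,\beta_i$ do not affect the supports is the only detail worth writing down, and you did.
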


While studying the certification of measurements with rank-one effects, one
cannot overlook their very important subclass, namely projective von Neumann
measurements. These measurements have effects of the form $\{ \proj{u_1}, \ldots
, \proj{u_n} \}$, where $\{\ket{u_i}\}_i$ form an orthonormal basis. This class
of measurements was studied in~\cite{lewandowska2021optimal}, though in a
slightly different context.  The main result of that work was the expression for
minimized probability of the false negative error, where the bound on the false
positive error was assumed. In this work, however, we consider the situation
when false negative error must be equal zero after sufficiently many uses.
Nevertheless, from
Corollary~\ref{corr:certification_POVMs_with_rank_one_effects} we can draw a
conclusion that 
any von Neumann measurement can be certified against some other von Neumann 
measurement
if and only if the measurements are not the same. 

\subsection{SIC POVMs}\label{sec:sic_povms_calculated}
Now we proceed to studying the certification of a special class of measurements
with rank-one effects, that is symmetric informationally complete (SIC) POVMs
\cite{renes2004symmetric,flammia2006sic,zhu2010sic,appleby2009properties}. We 
will directly calculate the bounds on the false
positive error in the single-shot and parallel certification. We will be using
the following notation. 
The SIC POVM $\PP_0$ with effects $\{\proj{x_i}\}_{i=1}^{d^2}$, where 
$\proj{x_i} = \frac{1}{d} \proj{\phi_i}$ and $\Vert \ket{\phi_i} \Vert = 1$, 
will be associated with the $H_0$ hypothesis.   
The SIC POVM $\PP_1$ corresponding to the alternative $H_1$ hypothesis will 
have effects $\{\proj{y_i}\}_{i=1}^{d^2}$, where 
$\proj{y_i} = \frac{1}{d} \proj{\phi_{\pi(i)}}$ and $\pi$ is a permutation of 
$d^2$ elements.
Moreover, the SIC condition assures that $| \braket{\phi_i}{\phi_{\pi(i)}}|^2 = 
\frac{1}{d+1}$ whenever $i \neq \pi(i)$.

\begin{remark}
From Corollary~\ref{corr:certification_POVMs_with_rank_one_effects} it follows
that for a SIC POVMs $\PP_0$ can be certified against SIC POVM $\PP_1$ in the  
parallel scheme as long as $\PP_0 \neq \PP_1$.
\end{remark}

Now we are working towards calculating the upper bound on the probability of  
the false positive error in single-shot certification of SIC POVMs. 
As the input state we take the maximally entangled state $\ket{\psi} \coloneqq 
\frac{1}{\sqrt{d}} \ketV{\1}$.
If the measurement was $\PP_0$, then the output state is
\begin{equation}
\rho_0^{\ket{\psi}} 
= \left(\PP_0 \otimes \1\right) \left( \proj{\psi}\right)
= \sum_{i=1}^{d^2} \proj{i} \otimes 
\frac{1}{d}(\proj{x_i})^\top
= \sum_{i=1}^{d^2} \proj{i} \otimes \frac{1}{d^2}(\proj{\phi_i})^\top,
\end{equation}
and similarly, if the measurement was $\PP_1$, then the output state is 
\begin{equation}
\rho_1^{\ket{\psi}} = \sum_{i=1}^{d^2} \proj{i} \otimes 
\frac{1}{d^2}(\proj{\phi_{\pi(i)}})^\top.
\end{equation}
As the output states have block-diagonal structure, 
we take the measurement effect to be in the block-diagonal form, that is
\begin{equation}\label{eq:Omega_SIC_construction_single_shot}
\Omega_0 \coloneqq \sum_{i=1}^{d^2} \proj{i} \otimes \Omega_i^\top,
\end{equation}
where for every $i$ we assume $\Omega_i \perp \proj{\phi_{\pi(i)}}$ to 
ensure that the probability of the false negative error is equal to zero.
We calculate 
\begin{equation}
\begin{split}
\tr \left(  \Omega_0 \rho_0 \right)
&= \tr \left(\left(  \sum_{i=1}^{d^2} \proj{i} \otimes \Omega_i^\top  \right)
\left(  \sum_{j=1}^{d^2} \proj{j} \otimes \frac{1}{d^2}(\proj{\phi_j})^\top  
\right)\right) \\
&= \tr \left( \sum_{i=1}^{d^2} \proj{i} \otimes \Omega_i^\top \frac{1}{d^2}
(\proj{\phi_i}) ^\top   \right) 
=\frac{1}{d^2} \sum_{i=1}^{d^2} \bra{\phi_i} \Omega_i \ket{\phi_i}.  
\end{split} 
\end{equation}

Let $k$ be the number of fixed points of the permutation $\pi$.
Taking $\Omega_i \coloneqq \1 - \proj{\phi_{\pi(i)}}$ we obtain

\begin{equation}
\begin{split}
\tr \left(  \Omega_0 \rho_0 \right)
&= \frac{1}{d^2} \sum_{i=1}^{d^2} \bra{\phi_i} \Omega_i \ket{\phi_i}
=  \frac{1}{d^2} \sum_{i=1}^{d^2} 
\bra{\phi_i} \left(\1 - \proj{\phi_{\pi(i)}}\right) \ket{\phi_i} \\
&= \frac{1}{d^2} \sum_{i=1}^{d^2} \left( 1- |\braket{\phi_i}{\phi_{\pi(i)}} 
|^2 
\right) 
= \frac{1}{d^2} \left(d^2 - k\right) \left( 1- |\braket{\phi_i}{\phi_{\pi(i)}} 
|^2 \right)  \\
&= \frac{1}{d^2} \left(d^2 - k\right) \left( 1- \frac{1}{d+1} \right)
= \frac{d^2 - k}{d^2 + d}.
\end{split}
\end{equation}
So far all the calculations were done for some fixed input state (maximally
entangled state) and measurement effect $\Omega_0$, which give us actually only
the upper bound on the probability of the false positive error. The current
choice of $\Omega_i = \1 - \proj{\phi_{\pi(i)}}$ seems like a good candidate,
but we do not know whether it is possible to find a better one. Using the
notation for the probability of the false positive error introduced in Eq.
\eqref{eq:def_of_p1} and \eqref{def_of_p1_conditional_single_shot}  we can write
our bound as
\begin{equation}
p_1 \leq p_1 \left(\ket{\psi}, \Omega_0 \right)= 1-\tr( \Omega_0 \rho_0) 
= \frac{d + k}{d^2+d}.
\end{equation}

On top of that, if $\pi$ does not have fixed points, that is when $k=0$, we have
$p_1 \leq \frac{1}{d+1}$ and the number of steps 
needed for  parallel certification is bounded by
$N_\epsilon \geq \left\lceil - \frac{\log \epsilon}{\log (d+1)} \right\rceil$.
In the case when the permutation $\pi$ has one fixed point, that is when $k=1$, 
it holds that $p_1 \leq \frac{1}{d}$ and hence the number of steps needed for 
parallel certification can be bounded by
$N_\epsilon \geq \left\lceil - \frac{\log \epsilon}{\log d} \right\rceil$.

\subsection{Parallel certification of SIC POVMs}\label{sec:sic_povms_parallel}
Let us consider a generalization of the results from previous subsection into 
the parallel scenario. We want to certify SIC POVMs $\PP_0$ and $\PP_1$ defined 
as in Subsection~\ref{sec:sic_povms_calculated}, however we assume that we are 
allowed to use the certified SIC POVM $N$ times in parallel. 
In this setup we associate the $H_0$ hypothesis with the measurement 
$\PP_0^{\otimes N}$, and analogously we associate  the $H_1$ hypothesis with 
the measurement  $\PP_1^{\otimes N}$.
It appears that the upper bound on false positive error is very similar to the
upper bound for the single-shot case. Straightforward but lengthy and technical
calculations give us
\begin{equation}\label{eq:sics_parallel_bound}
p_1^{\P, N}
 \leq \left( \frac{d + k}{d^2+d}\right)^N.
\end{equation}
The detailed derivation  of this bound is relegated to 
Appendix~\ref{app:sics_parallel}.

\section{Adaptive certification and Stein 
setting}\label{sec:adaptive_certification}
So far we were considering only the scheme in which the given channel is used 
a finite number of times in parallel. In this section we will focus on studying 
a more general scheme of certification, that is the adaptive certification.
In the adaptive scenario, we use the given channel $N$ times and between the 
uses we can perform some processing. It seems natural that the use of adaptive 
scheme instead of the simple parallel one should improve the certification. 
Surprisingly, in the case of von Neumann measurements the use of adaptive 
scheme gives no advantage over the parallel 
one~\cite{puchala2021multiple,lewandowska2021optimal}.
In other cases it appears that the use of processing is indeed a necessary 
step towards perfect 
discrimination~\cite{harrow2010adaptive,krawiec2020discrimination}.

Having the adaptive scheme as a generalization of the parallel one, let us take
a step further and take a look into the asymptotic setting. In other words, let
us discuss the situation when the number of uses of the certified channel tends
to infinity.  
There are various settings known in the literature concerning asymptotic
discrimination, like Stein and Hoeffding settings for asymmetric discrimination,
as well as Chernoff and Han-Kobayashi settings for symmetric discrimination. 
In the context of this work we will discuss only the setting concerning 
asymmetric discrimination, however
a concise introduction to all of these settings can be found e.g. 
in~\cite{wilde2020amortized}.
Arguably, the most well-known of these is the Hoeffding setting
which assumes the bound on the false negative error 
to be decreasing exponentially, and its area of interest is characterizing the 
error exponent of probability of false positive error.
Adaptive strategies for asymptotic discrimination in Hoeffding setting were 
recently explored in~\cite{salek2020adaptive}.
In the Stein  setting, on the other hand, we assume a constraint on the 
probability of false positive error and study the error exponent of the false 
negative error.  
Let us define a non-asymptotic quantity 
\begin{equation}
\zeta_n (\epsilon) \coloneqq
\sup_{\Omega_0, \ket{\psi} }
\left\{   
-\frac{1}{n} \log p_2^{\AA, n} \left( \ket{\psi}, \Omega_0 \right) : 
p_1^{\AA, n} \left( \ket{\psi}, \Omega_0 \right) \leq \epsilon
\right\},
\end{equation}
which describes the behavior of probabilities of errors in adaptive
discrimination scheme. The probability of false positive error after $n$ queries
is upper-bounded by some fixed $\epsilon$, and we are interested in studying how
quickly the probability of false negative error decreases. Therefore we consider
the logarithm of probability of false negative error divided by the number of
queries. Finally, a supremum is taken over all possible adaptive strategies,
that is we can choose the best input state, final measurement as well as the
processings between uses of the certified channel.

Note that in the previous sections we were considering $p_2^{\AA, N}$ instead 
of $p_2^{\AA, n}$, which in used in the Stein setting. The aim of this 
difference is to emphasize that in the Stein setting we study the situation in 
which the number of uses, $n$, tends to infinity. In contrary, in previous 
sections we were interested only in the case when the number of uses, $N$, was 
finite. 

Having introduced the non-asymptotic quantity $\zeta_n (\epsilon)$, let us
consider the case when the number of queries, $n$, tends to infinity. To do so,
we define the upper limit of the Stein exponent as
\begin{equation}\label{eq:stein_bounds}
\overline{\zeta} (\epsilon) \coloneqq
\limsup_{n \to \infty}
\zeta_n (\epsilon).
\end{equation}
Note that when $\overline{\zeta} (\epsilon)$ is finite, then  
the probability of the false negative error for adaptive certification will not
be equal to zero for any finite number of uses $N$. A very useful Remark $19$
from \cite{wilde2020amortized} states that $\overline{\zeta}(\epsilon)$ is
finite if and only if
\begin{equation}
\supp  \left( (\Phi_0 \otimes \1)(\proj{\psi_\text{ent}} ) \right)
\subseteq
\supp  \left( (\Phi_1 \otimes \1)(\proj{\psi_\text{ent}} ) \right), 
\end{equation}
where $\ket{\psi_\text{ent}}$ is the maximally entangled state.

Finally, we are in position to express the theorem stating the relation between 
adaptive and parallel certification.

\begin{theorem}\label{th:equivalence_parallel_adaptive}
Quantum channel $\Phi_0$ can be certified against quantum channel $\Phi_1$ in 
the parallel scenario if and only if quantum channel $\Phi_0$ can be certified 
against 
quantum channel $\Phi_1$ in the adaptive scenario.
\end{theorem}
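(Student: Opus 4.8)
The plan is to prove the two directions separately. One direction is immediate: since the adaptive scheme contains the parallel scheme as a special case (take all processings $\Xi_i$ to be identity channels), any certification achievable in parallel is achievable adaptively. So the content is the reverse implication: if $\Phi_0$ can be certified against $\Phi_1$ in the adaptive scenario, then it can be certified in parallel. I would prove the contrapositive via Theorem~\ref{th:condition_for_parallel_certification}: if parallel certification is \emph{impossible}, then $\supp(\Phi_0) \subseteq \supp(\Phi_1)$, and I will show this forces adaptive certification to be impossible as well.

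So suppose $\supp(\Phi_0) \subseteq \supp(\Phi_1)$. The key step is to show this inclusion propagates through one use of the channel \emph{regardless of what has been done before and after}, i.e. that for any adaptive strategy the output states satisfy $\supp(\tau_0^{N,\ket{\psi}}) \subseteq \supp(\tau_1^{N,\ket{\psi}})$. I would argue this inductively on the number of channel uses. For a single use on a register (possibly entangled with everything else), the assumption $E_i \in \SPAN\{F_j\}_j$ means each Kraus operator $E_i$ of $\Phi_0$ can be written $E_i = \sum_j c_{ij} F_j$; consequently, for any input $\ket{\chi}$ on the compound space, $(\Phi_0 \otimes \1)(\proj{\chi})$ has its support inside $\SPAN\{(F_j \otimes \1)\ket{\chi}\}_j$, which is exactly $\supp((\Phi_1 \otimes \1)(\proj{\chi}))$. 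Hence a single application of $\Phi_0$ never enlarges the support beyond that of $\Phi_1$. Since the subsequent processing $\Xi$ is a fixed CPTP map applied identically in both branches, and applying the same channel to a state $\rho_0$ with $\supp(\rho_0)\subseteq\supp(\rho_1)$ yields $\supp(\Xi(\rho_0))\subseteq\supp(\Xi(\rho_1))$, the inclusion is preserved across the processing step. Iterating through all $N$ uses and $N-1$ processings, we conclude $\supp(\tau_0^{N,\ket{\psi}}) \subseteq \supp(\tau_1^{N,\ket{\psi}})$ for every adaptive strategy.

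Finally, from this support inclusion I would derive that adaptive certification is impossible. If $\Omega_0$ is any measurement effect with $p_2^{\AA,N}(\ket{\psi},\Omega_0) = \Tr(\Omega_0 \tau_1^{N,\ket{\psi}}) = 0$, then $\Omega_0$ is orthogonal to $\supp(\tau_1^{N,\ket{\psi}})$, hence orthogonal to $\supp(\tau_0^{N,\ket{\psi}})$, so $\Tr(\Omega_0 \tau_0^{N,\ket{\psi}}) = 0$ and therefore $p_1^{\AA,N}(\ket{\psi},\Omega_0) = 1 - \Tr(\Omega_0 \tau_0^{N,\ket{\psi}}) = 1$. Thus the false positive probability cannot be made smaller than $1$ for any $N$, so $\Phi_0$ cannot be certified against $\Phi_1$ adaptively. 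This establishes the contrapositive and completes the proof. I expect the main obstacle to be stating cleanly the inductive "support does not grow" claim while tracking the auxiliary/reference registers correctly through both the channel applications and the processing maps; everything else is a direct consequence of the support bookkeeping together with Theorem~\ref{th:condition_for_parallel_certification}. Alternatively, one could invoke Remark~19 of \cite{wilde2020amortized} (finiteness of $\overline{\zeta}(\epsilon)$, equivalently non-decay of the false negative error) to package the asymptotic side, but the direct induction above is self-contained.
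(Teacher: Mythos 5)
Your proof is correct, but the reverse implication is argued by a genuinely different route than the paper's. The paper reduces everything to the asymptotic Stein exponent: it observes that adaptive certifiability forces $\overline{\zeta}(\epsilon)=\infty$, invokes Remark~19 of \cite{wilde2020amortized} (packaged as Lemma~\ref{lm:stein} via Lemmas~\ref{lm:inclusion_of_supports} and~\ref{lm:inclusion_of_supports_2}) to conclude $\supp(\Phi_0)\not\subseteq\supp(\Phi_1)$, and then applies Theorem~\ref{th:condition_for_parallel_certification} --- exactly the ``alternative'' you mention in your last sentence. You instead prove the contrapositive directly by an induction showing that the invariant $\supp(\tau_0)\subseteq\supp(\tau_1)$ survives every round of an adaptive protocol: one channel use preserves it because each $E_i\in\SPAN\{F_j\}_j$, and each interleaved processing $\Xi$ preserves it because $\rho_0\le c\,\rho_1$ implies $\Xi(\rho_0)\le c\,\Xi(\rho_1)$, whence any effect annihilating $\tau_1$ annihilates $\tau_0$ and $p_1^{\AA,N}=1$. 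Your argument is sound and self-contained (it does not import the external Remark~19, and it sidesteps the mild subtlety in the paper's step that finite-$N$ certifiability already makes the $\limsup$ defining $\overline{\zeta}(\epsilon)$ infinite); the only point to state carefully when writing it up is that after the first processing the register state is mixed, so the single-use step must be phrased for mixed inputs (decompose into pure states of the support, or use the operator inequality $\omega_0\le c\,\omega_1$), which you already flag as the main bookkeeping obstacle. What the paper's route buys is brevity and a connection to the asymptotic Stein setting it wants to discuss anyway; what yours buys is an elementary, quantitative explanation of \emph{why} adaptivity cannot help with excluding the false negative error.
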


Before presenting the proof of the Theorem we will state a useful lemma, which 
proof is postponed to Appendix~\ref{app:proofs_of_lemmas}.
\begin{lemma}\label{lm:stein}
Let $\overline{\zeta}(\epsilon)$ be as in 
Eq.~\eqref{eq:stein_bounds}. Then
$\overline{\zeta}(\epsilon)$ is finite if and only if
$\supp (\Phi_0) \subseteq \supp (\Phi_1)$.
\end{lemma}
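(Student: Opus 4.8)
The plan is to reduce the claim to a support inclusion that already appears, via Remark~$19$ of~\cite{wilde2020amortized}, as the stated criterion for finiteness of $\overline{\zeta}(\epsilon)$, and then to verify that this criterion is equivalent to the Kraus-operator condition $\supp(\Phi_0)\subseteq\supp(\Phi_1)$. Concretely, Remark~$19$ asserts that $\overline{\zeta}(\epsilon)$ is finite if and only if $\supp\bigl((\Phi_0\otimes\1)(\proj{\psi_\text{ent}})\bigr)\subseteq\supp\bigl((\Phi_1\otimes\1)(\proj{\psi_\text{ent}})\bigr)$, where $\ket{\psi_\text{ent}}=\tfrac{1}{\sqrt d}\ketV{\1}$ is the maximally entangled state. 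Thus it suffices to show that this inclusion of output-state supports is equivalent to $\supp(\Phi_0)\subseteq\supp(\Phi_1)$, after which the Lemma follows by chaining the two equivalences.

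First I would compute the two supports explicitly by vectorization. Writing $\Phi_0(X)=\sum_i E_i X E_i^\dagger$ and using the identity $(E_i\otimes\1)\ketV{\1}=\ketV{E_i}$, one obtains $(\Phi_0\otimes\1)(\proj{\psi_\text{ent}})=\tfrac1d\sum_i\ketV{E_i}\braV{E_i}$. Since this is a positive semidefinite operator written as a sum of outer products, its range, equivalently its support, is exactly $\SPAN\{\ketV{E_i}\}_i$. The same computation for $\Phi_1$, with Kraus operators $\{F_j\}_j$, gives $\supp\bigl((\Phi_1\otimes\1)(\proj{\psi_\text{ent}})\bigr)=\SPAN\{\ketV{F_j}\}_j$.

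It then remains to transfer the inclusion back to the operators themselves. The vectorization map $X\mapsto\ketV{X}$, equivalently $X\mapsto(X\otimes\1)\ketV{\1}$, is a linear bijection from the space of $d\times d$ matrices onto $\C^{d^2}$, so it carries spans to spans and preserves inclusions between them. Hence $\SPAN\{\ketV{E_i}\}_i\subseteq\SPAN\{\ketV{F_j}\}_j$ holds if and only if $\SPAN\{E_i\}_i\subseteq\SPAN\{F_j\}_j$, which is precisely $\supp(\Phi_0)\subseteq\supp(\Phi_1)$. Combining this equivalence with Remark~$19$ establishes that $\overline{\zeta}(\epsilon)$ is finite if and only if $\supp(\Phi_0)\subseteq\supp(\Phi_1)$, as claimed.

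The only point requiring care is the passage from the weighted sum $\tfrac1d\sum_i\ketV{E_i}\braV{E_i}$ to its support, namely the standard fact that the range of a positive semidefinite operator of this form equals the span of the generating vectors $\ketV{E_i}$; I expect this, together with confirming the vectorization identity under the present paper's conventions (any transpose arising from the convention is harmless, since $X\mapsto(X\otimes\1)\ketV{\1}$ remains a bijection), to be essentially the only genuine content, everything else reducing to the linearity of vectorization. I would also note that no appeal to Lemmas~\ref{lm:inclusion_of_supports}--\ref{lm:inclusion_of_supports_2} is needed here, because Remark~$19$ already fixes the maximally entangled state as the relevant input, so the argument never has to range over all input states.
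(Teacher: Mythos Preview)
Your proof is correct and takes a more direct route than the paper's. Both arguments pivot on Remark~19 of \cite{wilde2020amortized} to reduce the question to whether $\supp\bigl((\Phi_0\otimes\1)(\proj{\psi_\text{ent}})\bigr)\subseteq\supp\bigl((\Phi_1\otimes\1)(\proj{\psi_\text{ent}})\bigr)$. From there the paper invokes Lemma~\ref{lm:inclusion_of_supports} for one direction and Lemma~\ref{lm:inclusion_of_supports_2} (with $N=1$) for the other, thereby recycling machinery already proved for general full-Schmidt-rank inputs and arbitrary tensor powers. You instead exploit the specific structure of the maximally entangled state: the identity $(E_i\otimes\1)\ketV{\1}=\ketV{E_i}$ makes the output support literally equal to $\SPAN\{\ketV{E_i}\}_i$, and bijectivity of vectorization then yields both directions of the equivalence in one stroke. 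Your argument is shorter and self-contained for this lemma; the paper's version has the modest virtue of re-using existing lemmas, but those lemmas are themselves established by essentially the same Schmidt-expansion bookkeeping that your vectorization identity encapsulates in a single line.
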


\begin{proof}[Proof of Theorem \ref{th:equivalence_parallel_adaptive}]
When quantum channel $\Phi_0$ can be certified against the channel $\Phi_1$ in 
the parallel scenario, then naturally,
$\Phi_0$ can be certified against the channel $\Phi_1$ in 
the adaptive scenario.
Therefore it suffices to prove the reverse implication.  

Assume that the channel $\Phi_0$ can be certified against $\Phi_1$ in the 
adaptive scenario. This means that $\overline{\zeta}(\epsilon)$ is infinite. 
Hence from Lemma~\ref{lm:stein} it holds that $\supp (\Phi_0) \not\subseteq 
\supp (\Phi_1)$. Finally, from
Theorem~\ref{th:condition_for_parallel_certification}  we obtain that 
$\Phi_0$ can be certified against $\Phi_1$ in the 
parallel scheme.
\end{proof}

Theorem~\ref{th:equivalence_parallel_adaptive} states that 
if a quantum channel $\Phi_0$ can be certified against $\Phi_1$ in a finite 
number of queries, then the use of parallel scheme is always sufficient.
Therefore it may appear that adaptive
certification is of no value. Nevertheless, in some cases it still may be worth
using adaptive certification to reduce the number of uses of the certified
channel. For example in the case of SIC POVMs the use of adaptive scheme reduces
the number of steps significantly~\cite{krawiec2020discrimination}. A pair of
qutrit SIC POVMs can be discriminated perfectly after two queries in
adaptive scenario, therefore they can also be certified.
Nevertheless, they cannot be discriminated perfectly after any finite number of
queries in parallel.   
On the other hand, in the case of von Neumann measurements the number of steps
is the same no matter which scheme is used~\cite{puchala2021multiple}.

\section{Conclusions}\label{sec:conclusions}
As certification of quantum channels is in the NISQ era a task of significant
importance, the main aim of this work was to give an insight into this problem
from theoretical perspective. Certification was considered as an extension of
quantum hypothesis testing, which includes also preparation of an input state
and the final measurement.
We primarily focused on multiple-shot schemes of certification, that is our
areas of interest were mostly parallel and adaptive certification schemes. The
parallel scheme consists in certifying tensor products of channels while
adaptive scheme is the most general of all scenarios.

We derived a condition when after a finite number of queries in the parallel
scenario one can assure that the false negative error will not occur. We pointed
a class of channels which allow for excluding false negative error after a
finite number of uses in parallel but cannot be discriminated unambiguously. On
top of that, having a fixed upper bound on the probability of false positive
error, we found a bound on the number of queries needed to make the probability
of false positive error no greater than this fixed bound.

Moreover, we took into consideration the most general adaptive certification
scheme and studied whether it can improve the certification. It turned out that 
the use of parallel certification scheme is always sufficient to assure that the
false negative error will not occur after a finite number of queries.
Nevertheless, the number of queries needed to have the probability of false
positive error sufficiently small, may be decreased by using adaptive scheme.

\section*{Acknowledgments}

This work was supported by the Foundation for Polish Science (FNP) under grant
number POIR.04.04.00-00-17C1/18-00.
The project ,,Near-term quantum computers: Challenges, optimal
implementations and applications'' under Grant Number 
POIR.04.\\04.00-00-17C1/18-00,  is carried out within the Team-Net
programme of the Foundation for Polish Science co-financed by the
European Union under the European Regional Development Fund.

\section*{Author contributions}
All authors participated in  proving the theorems constituting the results 
of this work. {\L}.P. and Z.P. set the research objective and A.K. prepared the 
figures.

\section*{Competing interests}
The authors declare no competing interests.

\bibliographystyle{ieeetr}
\bibliography{certification}

\begin{thebibliography}{10}

\bibitem{helstrom1976quantum}
C.~W. Helstrom, {\em Quantum detection and estimation theory}, vol.~84.
\newblock Academic Press New York, 1976.

\bibitem{preskill2018quantum}
J.~Preskill, ``Quantum computing in the {NISQ} era and beyond,'' {\em Quantum},
  vol.~2, p.~79, 2018.

\bibitem{bharti2021noisy}
K.~Bharti, A.~Cervera-Lierta, T.~H. Kyaw, T.~Haug, S.~Alperin-Lea, A.~Anand,
  M.~Degroote, H.~Heimonen, J.~S. Kottmann, T.~Menke, {\em et~al.}, ``Noisy
  intermediate-scale quantum ({NISQ}) algorithms,'' {\em arXiv preprint
  arXiv:2101.08448}, 2021.

\bibitem{eisert2020quantum}
J.~Eisert, D.~Hangleiter, N.~Walk, I.~Roth, D.~Markham, R.~Parekh, U.~Chabaud,
  and E.~Kashefi, ``Quantum certification and benchmarking,'' {\em Nature
  Reviews Physics}, pp.~1--9, 2020.

\bibitem{kliesch2021theory}
M.~Kliesch and I.~Roth, ``Theory of {Q}uantum {S}ystem {C}ertification,'' {\em
  PRX Quantum}, vol.~2, no.~1, p.~010201, 2021.

\bibitem{liu2020efficient}
Y.-C. Liu, J.~Shang, X.-D. Yu, and X.~Zhang, ``Efficient verification of
  quantum processes,'' {\em Physical Review A}, vol.~101, no.~4, p.~042315,
  2020.

\bibitem{zeng2020quantum}
P.~Zeng, Y.~Zhou, and Z.~Liu, ``Quantum gate verification and its application
  in property testing,'' {\em Physical Review Research}, vol.~2, no.~2,
  p.~023306, 2020.

\bibitem{zhu2020efficient}
H.~Zhu and H.~Zhang, ``Efficient verification of quantum gates with local
  operations,'' {\em Physical Review A}, vol.~101, no.~4, p.~042316, 2020.

\bibitem{watrous2018theory}
J.~Watrous, {\em The Theory of Quantum Information}.
\newblock Cambridge University Press, 2018.

\bibitem{ziman2010single}
M.~Ziman and M.~Sedl{\'a}k, ``Single-shot discrimination of quantum unitary
  processes,'' {\em Journal of Modern Optics}, vol.~57, no.~3, pp.~253--259,
  2010.

\bibitem{bae2015discrimination}
J.~Bae, ``Discrimination of two-qubit unitaries via local operations and
  classical communication,'' {\em Scientific Reports}, vol.~5, no.~1, pp.~1--8,
  2015.

\bibitem{ji2006identification}
Z.~Ji, Y.~Feng, R.~Duan, and M.~Ying, ``Identification and distance measures of
  measurement apparatus,'' {\em Physical Review Letters}, vol.~96, no.~20,
  p.~200401, 2006.

\bibitem{sedlak2014optimal}
M.~Sedl{\'a}k and M.~Ziman, ``Optimal single-shot strategies for discrimination
  of quantum measurements,'' {\em Physical Review A}, vol.~90, no.~5,
  p.~052312, 2014.

\bibitem{puchala2018strategies}
Z.~Pucha{\l}a, {\L}.~Pawela, A.~Krawiec, and R.~Kukulski, ``Strategies for
  optimal single-shot discrimination of quantum measurements,'' {\em Physical
  Review A}, vol.~98, no.~4, p.~042103, 2018.

\bibitem{duan2016parallel}
R.~Duan, C.~Guo, C.-K. Li, and Y.~Li, ``Parallel distinguishability of quantum
  operations,'' in {\em 2016 IEEE International Symposium on Information Theory
  (ISIT)}, pp.~2259--2263, IEEE, 2016.

\bibitem{cao2016minimal}
T.-Q. Cao, Y.-H. Yang, Z.-C. Zhang, G.-J. Tian, F.~Gao, and Q.-Y. Wen,
  ``Minimal number of runs and the sequential scheme for local discrimination
  between special unitary operations,'' {\em Scientific Reports}, vol.~6,
  no.~1, pp.~1--6, 2016.

\bibitem{chiribella2008memory}
G.~Chiribella, G.~M. D’Ariano, and P.~Perinotti, ``Memory effects in quantum
  channel discrimination,'' {\em Physical Review Letters}, vol.~101, no.~18,
  p.~180501, 2008.

\bibitem{puchala2021multiple}
Z.~Pucha{\l}a, {\L}.~Pawela, A.~Krawiec, R.~Kukulski, and M.~Oszmaniec,
  ``Multiple-shot and unambiguous discrimination of von {N}eumann
  measurements,'' {\em Quantum}, vol.~5, p.~425, 2021.

\bibitem{harrow2010adaptive}
A.~W. Harrow, A.~Hassidim, D.~W. Leung, and J.~Watrous, ``Adaptive versus
  nonadaptive strategies for quantum channel discrimination,'' {\em Physical
  Review A}, vol.~81, no.~3, p.~032339, 2010.

\bibitem{krawiec2020discrimination}
A.~Krawiec, {\L}.~Pawela, and Z.~Pucha{\l}a, ``Discrimination of {POVM}s with
  rank-one effects,'' {\em Quantum Information Processing}, vol.~19, no.~12,
  pp.~1--12, 2020.

\bibitem{salek2020adaptive}
F.~Salek, M.~Hayashi, and A.~Winter, ``When are adaptive strategies in
  asymptotic quantum channel discrimination useful?,'' {\em arXiv preprint
  arXiv:2011.06569}, 2020.

\bibitem{pirandola2019fundamental}
S.~Pirandola, R.~Laurenza, C.~Lupo, and J.~L. Pereira, ``Fundamental limits to
  quantum channel discrimination,'' {\em npj Quantum Information}, vol.~5,
  no.~1, pp.~1--8, 2019.

\bibitem{zhuang2020ultimate}
Q.~Zhuang and S.~Pirandola, ``Ultimate limits for multiple quantum channel
  discrimination,'' {\em Physical Review Letters}, vol.~125, no.~8, p.~080505,
  2020.

\bibitem{katariya2020evaluating}
V.~Katariya and M.~M. Wilde, ``Evaluating the advantage of adaptive strategies
  for quantum channel distinguishability,'' {\em arXiv preprint
  arXiv:2001.05376}, 2020.

\bibitem{wang2019resource}
X.~Wang and M.~M. Wilde, ``Resource theory of asymmetric distinguishability for
  quantum channels,'' {\em Physical Review Research}, vol.~1, no.~3, p.~033169,
  2019.

\bibitem{wang2006unambiguous}
G.~Wang and M.~Ying, ``Unambiguous discrimination among quantum operations,''
  {\em Physical Review A}, vol.~73, no.~4, p.~042301, 2006.

\bibitem{feng2004unambiguous}
Y.~Feng, R.~Duan, and M.~Ying, ``Unambiguous discrimination between mixed
  quantum states,'' {\em Physical Review A}, vol.~70, no.~1, p.~012308, 2004.

\bibitem{zhang2006unambiguous}
C.~Zhang, Y.~Feng, and M.~Ying, ``Unambiguous discrimination of mixed quantum
  states,'' {\em Physics Letters A}, vol.~353, no.~4, pp.~300--306, 2006.

\bibitem{herzog2005optimum}
U.~Herzog and J.~A. Bergou, ``Optimum unambiguous discrimination of two mixed
  quantum states,'' {\em Physical Review A}, vol.~71, no.~5, p.~050301, 2005.

\bibitem{bergou2006optimal}
J.~A. Bergou, E.~Feldman, and M.~Hillery, ``Optimal unambiguous discrimination
  of two subspaces as a case in mixed-state discrimination,'' {\em Physical
  Review A}, vol.~73, no.~3, p.~032107, 2006.

\bibitem{herzog2009discrimination}
U.~Herzog, ``Discrimination of two mixed quantum states with maximum confidence
  and minimum probability of inconclusive results,'' {\em Physical Review A},
  vol.~79, no.~3, p.~032323, 2009.

\bibitem{lewandowska2021optimal}
P.~Lewandowska, A.~Krawiec, R.~Kukulski, {\L}.~Pawela, and Z.~Pucha{\l}a, ``On
  the optimal certification of von {N}eumann measurements,'' {\em Scientific
  Reports}, vol.~11, no.~1, pp.~1--16, 2021.

\bibitem{wilde2020amortized}
M.~M. Wilde, M.~Berta, C.~Hirche, and E.~Kaur, ``Amortized channel divergence
  for asymptotic quantum channel discrimination,'' {\em Letters in Mathematical
  Physics}, vol.~110, no.~8, pp.~2277--2336, 2020.

\bibitem{duan2009perfect}
R.~Duan, Y.~Feng, and M.~Ying, ``Perfect distinguishability of quantum
  operations,'' {\em Physical Review Letters}, vol.~103, no.~21, p.~210501,
  2009.

\bibitem{renes2004symmetric}
J.~M. Renes, R.~Blume-Kohout, A.~J. Scott, and C.~M. Caves, ``Symmetric
  {I}nformationally {C}omplete {Q}uantum {M}easurements,'' {\em Journal of
  Mathematical Physics}, vol.~45, no.~6, pp.~2171--2180, 2004.

\bibitem{flammia2006sic}
S.~T. Flammia, ``On {SIC-POVM}s in prime dimensions,'' {\em Journal of Physics
  A: Mathematical and General}, vol.~39, no.~43, p.~13483, 2006.

\bibitem{zhu2010sic}
H.~Zhu, ``{SIC POVM}s and {C}lifford groups in prime dimensions,'' {\em Journal
  of Physics A: Mathematical and Theoretical}, vol.~43, no.~30, p.~305305,
  2010.

\bibitem{appleby2009properties}
D.~Appleby, ``Properties of the extended {C}lifford group with applications to
  {SIC-POVM}s and {MUB}s,'' {\em arXiv preprint arXiv:0909.5233}, 2009.

\end{thebibliography}

\appendix
\section{Proofs of lemmas}\label{app:proofs_of_lemmas}

\begin{proof}[Proof of Lemma~\ref{lm:inclusion_of_supports}]
Suppose by contradiction that 
$\supp\left(\rho_0^{\ket{\psi}}\right) \subseteq 
\supp\left(\rho_1^{\ket{\psi}} , \ldots , \rho_m^{\ket{\psi}}\right)$, that is
\begin{equation}
\SPAN  \left\{  (E_i \otimes \1) \ket{\psi} \right\}_i
\subseteq
\SPAN  \left\{  \left(F^{(1)}_{j_1} \otimes \1\right) \ket{\psi} 
, \ldots ,
\left(F^{(m)}_{j_m} \otimes \1\right) \ket{\psi}
\right\}_{ {j_1}, \ldots,{j_m}}.
\end{equation}

Hence for every $i$
\begin{equation}
\begin{split}
(E_i \otimes \1) \ket{\psi} 
&= \sum_{j_1} \beta^{(1)}_{j_1} \left(F^{(1)}_{j_1} \otimes \1\right) 
\ket{\psi} 
+ \ldots +
\sum_{j_m} \beta^{(m)}_{j_m} \left(F^{(m)}_{j_m} \otimes \1\right) \ket{\psi} \\
&= \left(\sum_{j_1} \beta^{(1)}_{j_1} \left(F^{(1)}_{j_1} \otimes \1\right) 
+ \ldots +
\sum_{j_m} \beta^{(m)}_{j_m} \left(F^{(m)}_{j_m} \otimes \1\right) 
\right) \ket{\psi} \\
&= \left(\left(\sum_{j_1} \beta^{(1)}_{j_1} F^{(1)}_{j_1} 
+ \ldots +
\sum_{j_m} \beta^{(m)}_{j_m} F^{(m)}_{j_m} 
\right) \otimes \1 \right) \ket{\psi}, 
\end{split}
\end{equation}
where not all $\beta^{(k)}_{j_k}$ are  equal to zero.

As $\ket{\psi} \coloneqq \sum_t \lambda_t \ket{a_t} \ket{b_t}$, we have
\begin{equation}
(E_i \otimes \1) \ket{\psi}
= \sum_t \lambda_t \left( E_i \ket{a_t} \otimes\ket{b_t} \right)
\end{equation}
and
\begin{equation}
\begin{split}
&\left( \left(\sum_{j_1} \beta^{(1)}_{j_1} F^{(1)}_{j_1} 
+ \ldots +
\sum_{j_m} \beta^{(m)}_{j_m} F^{(m)}_{j_m} 
\right) \otimes \1 \right) \ket{\psi} \\
&= \sum_t \lambda_t
\left(\sum_{j_1} \beta^{(1)}_{j_1} F^{(1)}_{j_1} 
+ \ldots +
\sum_{j_m} \beta^{(m)}_{j_m} F^{(m)}_{j_m} 
\right) \ket{a_t} \otimes \ket{b_t}
\end{split}
\end{equation}

As $\{\ket{b_t}\}_t$ is an orthonormal basis, then for every $t$
\begin{equation}
E_i \ket{a_t} = 
\left(\sum_{j_1} \beta^{(1)}_{j_1} F^{(1)}_{j_1} 
+ \ldots +
\sum_{j_m} \beta^{(m)}_{j_m} F^{(m)}_{j_m} 
\right) \ket{a_t}, 
\end{equation}
and hence
\begin{equation}
E_i  = 
\sum_{j_1} \beta^{(1)}_{j_1} F^{(1)}_{j_1} 
+ \ldots +
\sum_{j_m} \beta^{(m)}_{j_m} F^{(m)}_{j_m}. 
\end{equation}
Therefore 
\begin{equation}
\SPAN \{E_i\}_i \subseteq \SPAN \left\{ F^{(1)}_{{j_1}} , \ldots ,  
F^{(m)}_{{j_m}} \right\}_{ {j_1}, \ldots,{j_m}},
\end{equation}
which implies that
$\supp(\Phi_0) \subseteq \supp \left(\Phi_1, \ldots , \Phi_m \right)$.

Finally, from the law of contraposition we obtain that if  
$\supp(\Phi_0) \not\subseteq \supp \left(\Phi_1, \ldots , \Phi_m \right)$, then 
$\supp\left(\rho_0^{\ket{\psi}}\right) \not\subseteq 
\supp\left(\rho_1^{\ket{\psi}} , \ldots , \rho_m^{\ket{\psi}}\right)$.
\end{proof}

\begin{proof}[Proof of Lemma~\ref{lm:inclusion_of_supports_2}]
Assume by contradiction that  $\supp (\Phi_0) \subseteq \supp \left(\Phi_1, 
\ldots \Phi_m\right)$, 
that is 
\begin{equation}
\SPAN \{E_i\}_i \subseteq \SPAN \left\{ F^{(1)}_{{j_1}} , \ldots ,  
F^{(m)}_{{j_m}} \right\}_{ {j_1}, \ldots,{j_m}}
\end{equation}
To simplify the notation, without loss of generality we define that
$\SPAN \left\{ F^{(1)}_{{j_1}} , \ldots ,  
F^{(m)}_{{j_m}} \right\}_{ {j_1}, \ldots,{j_m}} \eqqcolon \SPAN \{K_l\}_l$.
Hence for every natural number $N$ it also holds that 
\begin{equation}
\SPAN \left\{ E_{i_1} \otimes \ldots \otimes E_{i_N} \right\}_{i_1, \ldots , 
i_N}  
\subseteq 
\SPAN 
\left\{ K_{l_1} \otimes \ldots \otimes K_{l_N} \right\}_{l_1, \ldots , l_N}.
\end{equation}

Thus for every $i_1, \ldots , i_N$ we have that
\begin{equation}
E_{i_1} \otimes \ldots \otimes E_{i_N}
= \sum_{l_1, \ldots , l_N} \beta_{l_1, \ldots , l_N} 
K_{l_1} \otimes \ldots \otimes K_{l_N},
\end{equation}
where not all $\beta_{l_1, \ldots , l_N}$ are equal to zero.
Therefore for every $i_1, \ldots , i_N$ and input state $\ket{\psi}$, it 
also holds that
\begin{equation}
\begin{split}
\left( \left(E_{i_1} \otimes \ldots \otimes E_{i_N}\right) \otimes \1 
\right)\ket{\psi} 
&= \left( \left(\sum_{l_1, \ldots , l_N} \beta_{l_1, \ldots , l_N}  
K_{l_1} \otimes \ldots \otimes K_{l_N} 
\right) 
\otimes \1 \right)\ket{\psi} \\
&= \sum_{l_1, \ldots , l_N} \beta_{l_1, \ldots , l_N}  \left( 
\left(K_{l_1} \otimes \ldots \otimes K_{l_N} \right)
\otimes  
\1 
\right)\ket{\psi}
\end{split}
\end{equation}
and hence
\begin{equation}
\SPAN \left\{ \left( \left(E_{i_1} \otimes \ldots \otimes E_{i_N}\right) 
\otimes \1 
\right)\ket{\psi}\right\}_{i_1, \ldots , i_N}  
\subseteq
\SPAN \left\{ \left( \left(K_{l_1} \otimes \ldots \otimes K_{l_N} \right) 
\otimes \1 
\right)\ket{\psi}\right\}_{l_1, \ldots , l_N}.
\end{equation}
The above can be rewritten as 
\begin{equation}
\supp\left( \sigma_0^{N, \ket{\psi}}
\right) \subseteq
\supp \left( \sigma_1^{N, \ket{\psi}}, \ldots, \sigma_m^{N, \ket{\psi}}
\right), \quad N \in \mathbb{N}.
\end{equation}

Eventually, by the law of contraposition we obtain that if 
for some natural number $N$ and an input state $\ket{\psi}$ it holds that
\begin{equation}
\supp\left( \sigma_0^{N, \ket{\psi}}
\right) \not\subseteq
\supp \left( \sigma_1^{N, \ket{\psi}}, \ldots, \sigma_m^{N, \ket{\psi}}
\right),
\end{equation}
then $\supp (\Phi_0) \not\subseteq \supp \left(\Phi_1, \ldots \Phi_m\right)$.
\end{proof}

\begin{proof}[Proof of Lemma~\ref{lm:stein}]
($\implies$)
Let $\ket{\psi_\text{ent}}$ be the maximally entangled state.
When $\overline{\zeta}(\epsilon) < \infty$, then from 
Remark~$19$ from~\cite{wilde2020amortized} we have that
\begin{equation}
\supp  \left( (\Phi_0 \otimes \1)(\proj{\psi_\text{ent}} ) \right)
\subseteq
\supp  \left( (\Phi_1 \otimes \1)(\proj{\psi_\text{ent}} ) \right).
\end{equation}
From Lemma~\ref{lm:inclusion_of_supports}, (as the maximally entangled state 
has full Schmidt rank),  the above implies that 
$\supp (\Phi_0) \subseteq \supp (\Phi_1)$.

($\impliedby$) 
Now we assume that $\supp (\Phi_0) \subseteq \supp (\Phi_1)$.
From Lemma~\ref{lm:inclusion_of_supports_2}, this implies that for every 
natural number $N$ and every input state $\ket{\psi}$ it holds that
\begin{equation}
\supp  \left( (\Phi_0^{\otimes N} \otimes \1)(\proj{\psi} ) \right)
\subseteq
\supp  \left( (\Phi_1^{\otimes N} \otimes \1)(\proj{\psi} ) \right).
\end{equation}
Taking $N=1$ and $\ket{\psi} = \ket{\psi_\text{ent}}$ we obtain 
\begin{equation}
\supp  \left( (\Phi_0 \otimes \1)(\proj{\psi_\text{ent}} ) \right)
\subseteq
\supp  \left( (\Phi_1 \otimes \1)(\proj{\psi_\text{ent}} ) \right).
\end{equation}
Therefore from Remark~$19$ from~\cite{wilde2020amortized} we obtain that
$\overline{\zeta}(\epsilon) < \infty$.
\end{proof}

\section{Derivation of 
Eq.~\eqref{eq:sics_parallel_bound}}\label{app:sics_parallel}
Let $\PP_0$ and $\PP_1$ be as defined in 
Subsection~\ref{sec:sic_povms_calculated}. We consider the scenario where the 
certified measurement is used $N$ times in parallel. 
To calculate the bound on the parallel certification we will take particular 
choices of an input state and a final measurement. As for the input state, we 
will take the maximally entangled state, similarly as it was in the single-shot 
case. 
Applying tensor product of the SIC POVMs on the input state we obtain the 
output states either 
\begin{equation}
\begin{split}
\sigma_0^{N, \ket{\psi}} 
&= \frac{1}{d^{N}} \left( \PP_0 \otimes \ldots \otimes \PP_0 \otimes \1  
\right)
\left( \projV{\1}  \right) \\
&= \frac{1}{d^N} \sum_{i_1 , \ldots , i_N =1}^{d^2} \proj{i_1  \cdots i_N} 
\otimes \frac{1}{d^N}\left(\proj{\phi_{i_1} \cdots \phi_{i_N}}\right)^\top \\
&= \frac{1}{d^{2N}} \sum_{i_1 , \ldots , i_N =1}^{d^2} \proj{i_1  \cdots i_N} 
\otimes \left(\proj{\phi_{i_1} \cdots \phi_{i_N}}\right)^\top
\end{split}
\end{equation}
if the measurement was $\PP_0$, or
\begin{equation}
\begin{split}
\sigma_1^{N, \ket{\psi}} 
= \frac{1}{d^{2N}} \sum_{i_1 , \ldots , i_N =1}^{d^2} \proj{i_1  \cdots i_N} 
\otimes \left(\proj{\phi_{\pi(i_1)} \cdots \phi_{\pi(i_N)}}\right)^\top
\end{split}
\end{equation}
if the measurement was $\PP_1$. 
Similarly to the single-shot scenario, we take the measurement effect with 
block-diagonal structure, that is
\begin{equation}
\Omega_0 \coloneqq \sum_{i_1 , \ldots , i_N =1}^{d^2} \proj{i_1  \cdots i_N} 
\otimes 
\Omega_{i_1  \cdots i_N}^\top
\end{equation}
where we require $\Omega_{i_1  \cdots i_N} \perp \proj{\phi_{\pi(i_1)} \cdots
\phi_{\pi(i_N)}}$ to make sure that the false negative error will be equal zero.

We calculate
\begin{equation}
\begin{split}
\tr \left(\Omega_0 \sigma_0^{N, \ket{\psi}}  \right) 
&= \tr
\left(\left(  \sum_{i_1 , \ldots , i_N =1}^{d^2} \proj{i_1  \cdots i_N} \otimes 
\Omega_{i_1  \cdots i_N}^\top \right) \right.   \\
& \quad \quad \left. \left(\frac{1}{d^{2N}} \sum_{k_1 , \ldots , k_N =1}^{d^2} 
\proj{k_1 \cdots k_N} 
\otimes 
\left(\proj{\phi_{k_1} \cdots \phi_{k_N}}\right)^\top\right)
\right) \\
&= \frac{1}{d^{2N}} 
\sum_{i_1 , \ldots , i_N =1}^{d^2}
\sum_{k_1 , \ldots , k_N =1}^{d^2}
\tr 
\left(  \left(\proj{i_1  \cdots i_N} \otimes \Omega_{i_1  \cdots i_N}^\top 
\right) \right.  \\
& \left. \quad  \quad 
\left( \proj{k_1 \cdots k_N} \otimes 
\left(\proj{\phi_{k_1} \cdots \phi_{k_N}}\right)^\top\right)
\right) \\
&= \frac{1}{d^{2N}} 
\sum_{i_1 , \ldots , i_N =1}^{d^2}
\sum_{k_1 , \ldots , k_N =1}^{d^2}
\tr 
\left(\proj{i_1  \cdots i_N} \proj{k_1 \cdots k_N} 
\right.  \\
& \left. \quad  \quad 
\otimes \Omega_{i_1  \cdots i_N}^\top 
\left(\proj{\phi_{k_1} \cdots \phi_{k_N}}\right)^\top \right) \\
&= \frac{1}{d^{2N}} \sum_{i_1 , \ldots , i_N =1}^{d^2}  \tr 
 \left(\proj{i_1  \cdots i_N} \otimes \Omega_{i_1  \cdots i_N}^\top 
 \left(\proj{\phi_{i_1} \cdots \phi_{i_N}}\right)^\top \right)\\
&= \frac{1}{d^{2N}} \sum_{i_1 , \ldots , i_N =1}^{d^2}  \tr 
\left( \Omega_{i_1  \cdots i_N} \proj{\phi_{i_1} \cdots \phi_{i_N}} \right). 
\end{split}
\end{equation}
There are many possible choices of such $\Omega_{i_1  \cdots i_N}$ which 
fulfill the condition $\Omega_{i_1  \cdots i_N} \perp \proj{\phi_{\pi(i_1)} 
\cdots \phi_{\pi(i_N)}}$, but for the
time being we will take the one defined as follows
\begin{equation}
\Omega_{i_1  \cdots i_N} \coloneqq \1 - \proj{\phi_{\pi(i_1)} \cdots 
\phi_{\pi(i_N)}}.
\end{equation}
This choice of measurement effect may not appear optimal in general, but it is 
suitable for calculations due to its concise form. Therefore

\begin{equation}
\begin{split}
\tr \left(\Omega_0 \sigma_0^{N, \ket{\psi}}  \right) 
&= \frac{1}{d^{2N}} \sum_{i_1 , \ldots , i_N =1}^{d^2}  \tr 
\left( \left(  \1 - \proj{\phi_{\pi(i_1)} \cdots \phi_{\pi(i_N)}} \right) 
\proj{\phi_{i_1} \cdots \phi_{i_N}} \right) \\
&= \frac{1}{d^{2N}} \sum_{i_1 , \ldots , i_N =1}^{d^2}  
\left(  1-  
\left\vert  \braket{\phi_{i_1} \cdots \phi_{i_N}}{\phi_{\pi(i_1)} \cdots 
\phi_{\pi(i_N)}} 
\right\vert^2 \right) \\
&=1-  \frac{1}{d^{2N}} \sum_{i_1 , \ldots , i_N =1}^{d^2}  
\left\vert  \braket{\phi_{i_1} \cdots \phi_{i_N}}{\phi_{\pi(i_1)} \cdots 
\phi_{\pi(i_N)}} 
\right\vert^2.
\end{split}
\end{equation}
Therefore we have
\begin{equation}
p_1^{\P, N} \left(\ket{\psi}, \Omega_0 \right)  
= 1-\tr \left(\Omega_0 \sigma_0^{N, \ket{\psi}}  \right)
=\frac{1}{d^{2N}} \sum_{i_1 , \ldots , i_N =1}^{d^2}  
\left\vert  \braket{\phi_{i_1} \cdots \phi_{i_N}}{\phi_{\pi(i_1)} \cdots 
\phi_{\pi(i_N)}} 
\right\vert^2.
\end{equation}
To get the exact upper bound we need to calculate the sum, that is to explain 
that
\begin{equation}\label{eq:parallel_sics_combinatorics}
\sum_{i_1 , \ldots , i_N =1}^{d^2}  
\left\vert  \braket{\phi_{i_1} \cdots \phi_{i_N}}{\phi_{\pi(i_1)} \cdots 
\phi_{\pi(i_N)}} 
\right\vert^2
= \sum_{s=0}^{N} \binom{N}{N-s} k^{N-s} (d^2 - k)^s 
\frac{1}{(d+1)^s}.
\end{equation}
First, note that 
\begin{equation}
\left\vert  \braket{\phi_{i_1} \cdots \phi_{i_N}}{\phi_{\pi(i_1)} \cdots 
\phi_{\pi(i_N)}} \right\vert^2 
=\left\vert 
\braket{\phi_{i_1}}{\phi_{\pi(i_1)}} \cdots \braket{\phi_{i_N}}{\phi_{\pi(i_N)}}
 \right\vert^2
= \frac{1}{(d+1)^s}, 
\end{equation}
where $s \coloneqq | \{ i_l :\ i_l \neq \pi(i_l) \}|$.
In other words, every time we encounter a fixed point of the permutation we 
have a factor $\braket{\phi_{i_j}}{\phi_{\pi(i_j)}}$ which is equal one.
Let us now focus on consecutive factors of the right hand site of the 
Eq.~\eqref{eq:parallel_sics_combinatorics}.
The factor $\binom{N}{N-s}$ corresponds to choosing $N-s$ elements for which 
$\braket{\phi_{i_j}}{\phi_{\pi(i_j)}} = 1$. Then, on each of those elements 
there can be one of $k$ elements (as $k$ stands for the number of fixed points 
of the permutation $\pi$), therefore $k^{N-s}$. Then, on the remaining $s$ 
elements there can one of $d^2 - k$ values which are not fixed points of the 
permutation, hence we obtain $(d^2 - k)^s$. 
Further calculations reveal the concise expression for the upper bound on the 
probability of false negative error, that is 
\begin{equation}
p_1^{\P, N} \left(\ket{\psi}, \Omega_0 \right)  
= \frac{1}{d^{2N}} \sum_{s=0}^{N} \binom{N}{N-s} k^{N-s} (d^2 - k)^s 
\frac{1}{(d+1)^s}
=\left( \frac{d + k}{d^2+d}\right)^N.
\end{equation}
In the case of permutation $\pi$ having no fixed points, that is when 
$k=0$, the above bound simplifies to 
$p_1^{\P, N} \leq \left( \frac{1}{d+1}\right)^N$.

\end{document}